\documentclass[11pt,draft,peerreview]{IEEEtran}
\usepackage{amssymb}
\usepackage{amsmath,amsthm,amssymb}
\usepackage{graphicx}
\usepackage{color}
\usepackage{algorithm}  
\usepackage{enumerate}
\usepackage[noend]{algorithmic}
\allowdisplaybreaks
\begin{document}
\raggedbottom
\sloppy
\newtheorem{claim}{Claim}
\newtheorem{corollary}{Corollary}
\newtheorem{definition}{Definition}
\newtheorem{example}{Example}
\newtheorem{exercise}{Exercise}
\newtheorem{fact}{Fact}
\newtheorem{lemma}{Lemma}
\newtheorem{note}{Note}
\newtheorem{obs}{Observation}
\newtheorem{problem}{Problem}
\newtheorem{property}{Property}
\newtheorem{proposition}{Proposition}
\newtheorem{question}{Question}
\newtheorem{ru}{Rule}
\newtheorem{solution}{Solution}
\newtheorem{theorem}{Theorem}
\newenvironment{remark}[1]{\textbf{Remark: }}

\newcommand{\A}{{\bf a}}
\newcommand{\CC}{{\cal C}}
\newcommand{\B}{{\bf b}}
\newcommand{\I}{{\bf i}}
\newcommand{\p}{{\bf p}}
\newcommand{\bh}{{\bf h}}
\newcommand{\q}{{\bf q}}
\newcommand{\x}{{\bf x}}
\newcommand{\y}{{\bf y}}
\newcommand{\z}{{\bf z}}
\newcommand{\M}{{\bf M}}
\newcommand{\N}{{\bf N}}
\newcommand{\X}{{\bf X}}
\def\r {{\bf r}}
\newcommand{\bP}{{\bf P}}
\newcommand{\Q}{{\bf Q}}
\newcommand{\U}{{\bf u}}
\def\d {{\tt d}}
\def\D {{\tt D}}
\def\V {{\tt V}}
\def\Av {{\tt Avg}}
\def\restr{\downharpoonright}
\newcommand{\comment}[1]{\ $[\![${\normalsize #1}$]\!]$ \ }

\newcommand{\restrict}{{\mathbin{\vert\mkern-0.5mu\grave{}}}}

\newcommand{\glb}{{\tt glb}}
\newcommand{\lub}{{\tt lub}}

\newcommand{\LL}{{\cal L}}
\newcommand{\PP}{{\cal P}}
\newcommand{\pv}{{{\bf p}=(p_1,\ldots , p_n)}}

\newcommand{\qv}{{{\bf q}=(q_1,\ldots , q_n)}}
\newcommand{\Inf}{{\bf t}}
\newcommand{\Sup}{{\bf s}}
\newcommand{\inff}{{\tt t}}
\newcommand{\supp}{{\tt s}}
\def\H {{\cal H}}
\def\l {{\bf l}}

\allowdisplaybreaks
\newcommand{\g}{{\bf g}}
\newcommand{\s}{{sup}}
\def\i {{inf}}
\def\r {{\bf r}}
\def\t {{\bf t}}
\def\m{{\bf m}}
\def\half{{\rm half}}
\newcommand{\pcoin}{{coin of bias $p$ }}
\newcommand{\ptree}{{tree of bias $p$ }}
\newcommand{\remove}[1]{}

\newcommand{\todo}[1]{{\tiny\color{red}$\tiny\circ$}%
{\marginpar{\flushleft\\scriptsizes\sf\color{red}$\bullet$#1}}}

\thispagestyle{plain}

\title{How to Find  a Joint Probability Distribution of \\Minimum Entropy (almost) given the  Marginals}
\author{Ferdinando Cicalese, Luisa Gargano, \IEEEmembership{Member, IEEE} and Ugo Vaccaro, \IEEEmembership{Member, IEEE}%
\thanks{F. Cicalese is with the Dipartimento di Informatica,  Universit\`a di Verona, Verona, 
     Italy (email: ferdinando.cicalese@univr.it), L. Gargano is with the Dipartimento di Informatica,  Universit\`a di Salerno,
		Fisciano (SA), Italy (email: lgargano@unisa.it), 
 and U. Vaccaro is with the Dipartimento di Informatica,  Universit\`a di Salerno,
		Fisciano (SA), Italy (email: uvaccaro@unisa.it).	}}
\maketitle
\begin{abstract}
Given two discrete random variables $X$ and $Y$, with probability distributions
$\p=(p_1, \ldots , p_n)$ and $\q=(q_1, \ldots , q_m)$, respectively,
denote by $\CC(\p, \q)$ 
the set of all 
couplings of $\p$ and $\q$, that is, the set of  all  bivariate probability distributions 
that have $\p$ and $\q$ as marginals.
In this paper, we study  the problem of
finding the joint probability distribution in $\CC(\p, \q)$ of
\emph{minimum entropy} (equivalently, the joint probability distribution
that \emph{maximizes} the mutual information between $X$ and $Y$),
and we discuss several situations where the need for
this kind of optimization  naturally arises. 
Since the optimization problem is known to be NP-hard,
we give an efficient algorithm to find a joint probability distribution in $\CC(\p, \q)$
with   entropy  exceeding the minimum possible by at most 1, 
thus providing an approximation algorithm with additive approximation
factor of 1. 

Leveraging on this algorithm, we extend our result to
the problem of finding a minimum--entropy joint distribution
of arbitrary $k\geq 2$ discrete random variables $X_1, \ldots , X_k$, consistent
with the known $k$ marginal distributions of $X_1, \ldots , X_k$.
In this case, our approximation algorithm has an additive approximation
factor of $\log k$. 

 We also discuss some related  applications  of our findings.
\end{abstract}
\section{Introduction and Motivations}
Inferring  an unknown joint distribution of two random variables (r.v.),
 when only their  marginals are given, is an old problem 
in the area of probabilistic inference. The problem goes back at least to Hoeffding \cite{Hoeff}
and Frechet \cite{frechet},
who studied the question of identifying the extremal joint distribution of
r.v. $X$ and $Y$ that maximizes (resp., minimizes) their correlation, 
given the    marginal distributions of $X$ and $Y$.
We refer the reader to \cite{benes,cuadras,Aglio,Lin} for a (partial) account of the
 vast literature in the area and the many  applications in the pure and applied sciences.

In this paper, we consider the following  case  of the general  problem  described above. Let $X$ and $Y$ be 
two discrete r.v., distributed according to $\p=(p_1, \ldots , p_n)$ and $\q=(q_1, \ldots , q_m)$, respectively. 
We seek a \emph{minimum-entropy} joint probability distribution of $X$ and $Y$, whose   marginals are equal  to 
$\p$ and $\q$. This problem arises in many situations. For instance, the authors of \cite{ko+}
consider the  important question  of identifying the correct causal direction between two arbitrary
r.v. $X$ and $Y$, that is,  they want to discover whether it is  the case that $X$ 
causes $Y$ or  it is  $Y$  that causes $Y$.  In general,
$X$  causes $Y$ if there exists an exogenous  r.v. $E$ and a deterministic function $f$ such that
$Y=f(X,E)$. In order to identify the correct causal direction (i.e., either from $X$ to $Y$ or 
from $Y$ to $X$), the authors of \cite{ko+} make the reasonable postulate that the entropy of the  exogenous  r.v. $E$ is {small} in the 
\emph{true} causal direction, and empirically validate this assumption.
Additionally, they prove the interesting fact that 
the problem of finding the exogenous variable $E$  with minimum entropy
 is equivalent to the problem of finding the minimum-entropy joint
distribution of properly defined random variables, \emph{given} (i.e., fixed) their marginal distributions.
This is exactly    the problem we consider in this paper.
The authors of \cite{ko+} also observe that the latter optimization 
problem is NP-hard (due to results of \cite{KSS15,V12}), and evaluate experimentally 
a greedy  approximation algorithm  to find the minimum-entropy joint
distribution, given the marginals. No proved  performance guarantee is
given in \cite{ko+} for that  algorithm. In this paper, we give a
(different) greedy algorithm and   we  prove that it
returns a correct  joint probability distribution (i.e., with 
the prescribed marginals)
with   entropy  exceeding the \emph{minimum possible by at most of 1}.
Subsequently, in Section \ref{sec:more} we extend our algorithm to the case of more than two
random variables.  More precisely, 
we consider 
the problem of finding a minimum--entropy joint distribution
of arbitrary $k\geq 2$ discrete random variables $X_1, \ldots , X_k$, consistent
with the known $k$ marginal distributions of $X_1, \ldots , X_k$.
In this case, our approximation algorithm has an additive approximation
factor of $\log k$. 

\smallskip
Another work  that considers the problem of finding the 
minimum-entropy joint  distribution 
of two r.v. $X$ and $Y$, given the marginals of $X$ and $Y$, is the paper 
\cite{V12}. There, the author introduces   a pseudo-metric $\D(\cdot, \cdot)$ among discrete 
probability distributions in the following way:
given arbitrary $\p=(p_1, \ldots, p_n)$ 
and $\q=(q_1, \ldots , q_m)$, $m\leq n$, the distance $\D(\p,\q) $
among $\p$ and $\q$  is defined as the quantity
 $\D(\p,\q)=2W(\p,\q)-H(\p)-H(\q)$,
where $W(\p,\q)$ is the \emph{minimum} entropy of a bivariate
probability distribution that has $\p$ and $\q$ as marginals,
and $H$ denotes the Shannon entropy. This metric is applied  in \cite{V12}  to the problem of order-reduction
of stochastic processes.
The author of \cite{V12} observes that the problem of computing
$W(\p,\q)$ is NP-hard and proposes another different  greedy algorithm for its
computation, based on some analogy with the problem of Bin Packing
with overstuffing. Again, no performance guarantee is given in
\cite{V12} for the proposed algorithm. Our result directly implies
that we can compute the value of $\D(\p,\q)$, for \emph{arbitrary} $\p$ and $\q$,
with an additive error of at most $1$.\footnote{We remark that in \cite{CGVold}
we considered   the different problem of computing  the
probability distributions $\q^*$  that \emph{minimizes} $\D(\p,\q)$,
given $\p$.}

\smallskip
There are many other problems that require the computation of the 
{minimum-entropy} joint probability distribution of two 
random variables, whose   marginals are equal  to 
$\p$ and $\q$. We shall limit ourselves
to discuss a few additional examples, postponing a more
complete examination in a future  version of the paper.
To this purpose, let us write the joint entropy of two r.v. $X$ and $Y$,
distributed according to $\p$ and $\q$, respectively,  as
$H(XY)=H(X)+H(Y)-I(X;Y)$, where $I(X;Y)$ is the mutual information
between $X$ and $Y$. Then,  one sees that our original problem 
can  be equivalently stated as the determination  of  a joint probability distribution of
$X$ and $Y$ (having   given marginals $\p$ and $\q$) that \emph{maximizes} the mutual information
$I(X;Y)$. In the paper \cite{KSS15} this maximal mutual information is interpreted,
in agreement with Renyi axioms for a \emph{bona fide} dependence measure \cite{renyi},
as a measure of the \emph{largest possible} dependence of the two r.v. $X$ and $Y$.
Since the problem of its exact computation is obviously NP-hard, our result
implies  an approximation algorithm for it.
Another situation  where  the need to maximize the mutual information 
between two r.v. (with fixed probability distributions) naturally   arises, is in the area of medical imaging
 \cite{pluim, wells}.
Finally, our problem could also be seen as a kind of ``channel-synthesis'' problem, where it is 
given pair of r.v. $(X,Y)$, and the goal is to construct a memoryless channel
that maximizes the mutual information $I(X;Y)$ between $X$ and $Y$.

\section{Mathematical Preliminaries}

We start by recalling  a few  notions of majorization theory \cite{MO}
that  are relevant to our context.

\begin{definition}\label{defmaj} 
Given two probability distributions
$\A=(a_1, \ldots ,a_n)$ and $\B=(b_1, \ldots , b_n)$ with $a_1\geq \ldots \geq a_n\geq 0$ and 
$b_1\geq \ldots \geq b_n\geq 0$, $\sum_{i=1}^na_i=\sum_{i=1}^nb_i=1$, we say that $\A$ is 
{\em majorized} by $\B$, and write  $\A \preceq \B$,
if and only if
$\sum_{k=1}^i a_k\leq \sum_{k=1}^i b_k, \quad\mbox{\rm for all }\  i=1,\ldots , n.$
\end{definition}
We assume that \emph{all the probabilities distributions 
we deal with have been
ordered in non-increasing order}. This assumption does not affect our results, since the quantities we compute
(i.e., entropies) are invariant with respect to permutations of the components of
the involved probability distributions. We also  use the majorization 
relationship between vectors of unequal lengths, by properly padding the shorter
one with the appropriate number of $0$'s at the end.
The majorization relation $\preceq$ is a partial ordering on the set
$$\PP_n=\{(p_1,\ldots, p_n)\ :  \sum_{i=1}^n p_i=1, \ p_1\geq \ldots \geq p_n\geq 0\}$$  
of all ordered probability vectors of $n$ elements, that is, for each $\x,\y,\z\in\PP_n$
it holds that 
\begin{itemize}
\item[1)] $\x\preceq \x$;
\item[2)] $\x\preceq \y$ and $\y\preceq \z$  implies $\x\preceq \z$;
\item[3)]  $\x\preceq \y$ and $\y\preceq \x$  implies $\x =\y$.
\end{itemize}

It turns out that   that the partially ordered set $(\PP_n,\preceq)$ is indeed a \emph{lattice} \cite{CV},\footnote{The same result was independently 
rediscovered in \cite{cuff}, see also \cite{harr} for a different   proof.} 
i.e., for all $\x,\y  \in \PP_n$
there exists
a unique
{\em least upper bound}  $\x\lor \y$ and 
 a unique {\em greatest lower bound}  $\x \land \y$.
We recall that  the least upper bound
 $\x \lor \y$ is the vector in $\PP_n$ such that:  
$$\x\preceq \x \lor \y, \ \y\preceq \x \lor \y, \ \mbox{ and 
for all } \ \z\in \PP_n  \ \mbox{ for which } \
\x \preceq \z,\ \y \preceq \z\ \mbox{ it holds that  } \ \x\lor \y\preceq \z.$$
Analogously, the greatest lower bound $\x \land \y$ is the vector in $\PP_n$ such that: 
$$ \x \land \y\preceq \x,  \x \land \y\preceq \y, \ \mbox{ and 
for all } \  \z\in \PP_n \ \mbox{ for which } \ 
\z \preceq \x,\ \z \preceq \y\ \mbox{ it holds that  } \ \z\preceq \x\land \y.$$
In the  paper \cite{CV} the authors also gave a simple and efficient algorithm to 
compute $\x \lor \y$ and $\x\land \y$, given arbitrary vectors $\x,\y\in \PP_n$. 
Due to the important  role it will play in our main result, 
we recall how to  compute  the greatest lower bound. 
\begin{fact}{\rm\cite{CV}} \label{glb}
Let $\x = (x_1, \dots, x_n), \y = (y_1, \dots, y_n) \in \PP_n$ and let $\z = (z_1, \dots, z_n) = \x \land \y$. 
Then, {$z_1=\min\{p_1, q_1\}$ and 
for each $i = 2, \dots, n,$ it holds that $$z_i = \min \Bigl \{\sum_{j=1}^i p_j, \sum_{j=1}^i q_j\Bigr\} 
- \sum_{j=1}^{i-1} z_{j}.$$  }
\end{fact}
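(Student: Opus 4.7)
The plan is to verify that the explicitly defined vector $\z$ satisfies the three defining properties of the greatest lower bound $\x\land\y$: (i) $\z\in\PP_n$, (ii) $\z\preceq\x$ and $\z\preceq\y$, and (iii) any $\z'\in\PP_n$ with $\z'\preceq\x,\z'\preceq\y$ also satisfies $\z'\preceq\z$.

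First, I would reformulate the recursion in terms of partial sums. Setting $S_i(\x)=\sum_{j=1}^i x_j$, $S_i(\y)=\sum_{j=1}^i y_j$, and $T_i=\sum_{j=1}^i z_j$, a trivial induction on $i$ (using the recursion $z_i = \min\{S_i(\x),S_i(\y)\} - T_{i-1}$ with $T_0=0$) gives the clean identity $T_i = \min\{S_i(\x),S_i(\y)\}$ for every $i$. This identity is the backbone of everything else: it makes the lower-bound and maximality conditions (which are formulated purely in terms of partial sums) immediate, and reduces the membership condition $\z\in\PP_n$ to elementary statements about the sequence $T_i$.

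Next, I would verify $\z\in\PP_n$. Summing to one is immediate since $T_n=\min\{1,1\}=1$. Non-negativity $z_i = T_i - T_{i-1}\ge 0$ follows because both $S_i(\x)$ and $S_i(\y)$ are non-decreasing, so $T_i$ is too. The only non-trivial point is monotonicity: I need $z_i\ge z_{i+1}$, i.e. $T_{i+1}-2T_i+T_{i-1}\le 0$. Here I would use the fact that each partial-sum sequence is \emph{discretely concave}: since $\x\in\PP_n$, the successive differences of $S_i(\x)$ are $x_i$, a non-increasing sequence, so $S_{i+1}(\x)-2S_i(\x)+S_{i-1}(\x)=x_{i+1}-x_i\le 0$, and likewise for $\y$. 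A short case analysis (at index $i$ where, say, $T_i=S_i(\x)$, use $T_{i-1}\le S_{i-1}(\x)$ and $T_{i+1}\le S_{i+1}(\x)$) then shows that the pointwise minimum of two discretely concave sequences is discretely concave, which gives exactly the required inequality. This concavity-preservation step is really the only obstacle; everything else in the proof is bookkeeping.

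Finally, the two lattice properties fall out from the identity $T_i=\min\{S_i(\x),S_i(\y)\}$. Since $T_i\le S_i(\x)$ and $T_i\le S_i(\y)$ for all $i$, by Definition~\ref{defmaj} we get $\z\preceq\x$ and $\z\preceq\y$. For maximality, any $\z'\in\PP_n$ with $\z'\preceq\x$ and $\z'\preceq\y$ satisfies $\sum_{j=1}^i z'_j\le S_i(\x)$ and $\sum_{j=1}^i z'_j\le S_i(\y)$, hence $\sum_{j=1}^i z'_j\le\min\{S_i(\x),S_i(\y)\}=T_i=\sum_{j=1}^i z_j$, so $\z'\preceq\z$. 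Together with antisymmetry of $\preceq$ on $\PP_n$, this pins down $\z$ as $\x\land\y$.
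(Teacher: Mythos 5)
The paper does not actually prove Fact~\ref{glb}; it is stated and attributed to the cited reference, with no argument given in the text. So there is no internal proof to compare against, and your proposal should be judged on its own terms.

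On those terms, your proof is correct and complete. The reduction to the partial-sum identity $T_i=\min\{S_i(\x),S_i(\y)\}$ is exactly the right move: it turns the lower-bound property $\z\preceq\x,\z\preceq\y$ and the maximality property into one-line consequences of the definition of majorization, which is phrased in terms of prefix sums. The only genuinely non-trivial point is, as you say, that $\z$ lands in $\PP_n$, specifically that $z_i\ge z_{i+1}$, and your discrete-concavity argument handles it cleanly: the prefix-sum sequence of any vector in $\PP_n$ has non-increasing increments, hence is discretely concave; the pointwise minimum of two discretely concave sequences is discretely concave (your WLOG case split at the argmin index is exactly the standard proof); and the increments of $T_i$ are then non-increasing, which is precisely $z_i\ge z_{i+1}$. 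Non-negativity of $z_i$ follows from monotonicity of the $T_i$, and $T_n=1$ gives normalization. Antisymmetry of $\preceq$ then uniquely pins down $\z$ as $\x\land\y$. One cosmetic remark: the statement of the Fact in the paper writes $p_j,q_j$ where it should write $x_j,y_j$ (a typo carried over from the surrounding context), and you have silently and correctly read it the intended way.
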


\noindent
We also remind   the important  Schur-concavity property  of the entropy function \cite{MO}:

\emph{For any  $\x,\y\in\PP_n$, 
 $\x\preceq \y$ implies that    $H(\x)\geq H(\y)$, with equality 
    if and only if   $ \x=\y$. }
		
		\medskip
A notable strengthening  of above fact has been proved  in \cite{HV}. There, the authors prove that 
$\x\preceq \y$ implies $H(\x)\geq H(\y)+D(\y||\x)$, where $D(\y||\x)$ is the relative entropy
between $\x$ and $\y$.

We also need the  concept of \emph{aggregation} \cite{V12,CGVold} and a result from \cite{CGVold},
whose proof is repetead here to make the paper self-contained.
Given $\p=(p_1, \ldots , p_n)\in \PP_n $, 
we say that $\q=(q_1, \ldots , q_m)\in \PP_m$ is an \emph{aggregation} 
of $\p$ if there is a partition of $\{1, \ldots , n\}$ into disjoint sets $I_1, \ldots , I_m$
such that $q_j=\sum_{i\in I_j}p_i$, for $j=1, \ldots m$.

\begin{lemma}{\rm\cite{CGVold}}\label{pprecq}
Let $\q\in \PP_m $ be \emph{any} aggregation of $\p\in \PP_n$.  Then it holds that
$\p\preceq \q$.
\end{lemma}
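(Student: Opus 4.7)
The plan is to verify the majorization inequality $\sum_{k=1}^{i} p_k \leq \sum_{k=1}^{i} q_k$ directly for every $i$, using a simple counting argument on the partition classes that witness the aggregation.

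Fix any $i \in \{1, \ldots, n\}$ and let $T = \{\, j \in \{1,\ldots,m\} : I_j \cap \{1, \ldots, i\} \neq \emptyset\,\}$, i.e., the set of indices of those partition classes that ``touch'' the top $i$ positions of $\p$. First I would upper-bound the cardinality of $T$: since the sets $I_j \cap \{1,\ldots,i\}$ for $j \in T$ are pairwise disjoint and each is nonempty, their union is contained in $\{1,\ldots,i\}$, so $|T| \leq i$. This is the small but crucial counting observation.

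Next I would establish the two inequalities that sandwich $\sum_{j \in T} q_j$. On one side, since all $p_k$ are non-negative and $\{1,\ldots,i\} \subseteq \bigcup_{j \in T} I_j$, the definition of aggregation gives
\[
\sum_{j \in T} q_j \;=\; \sum_{j \in T} \sum_{k \in I_j} p_k \;\geq\; \sum_{k=1}^{i} p_k.
\]
On the other side, because $\q$ is ordered in non-increasing fashion and $|T| \leq i$, the sum of the largest $i$ entries of $\q$ dominates the sum over any subset of size at most $i$, giving
\[
\sum_{j \in T} q_j \;\leq\; \sum_{k=1}^{i} q_k.
\]
Chaining the two bounds yields $\sum_{k=1}^{i} p_k \leq \sum_{k=1}^{i} q_k$, which is exactly Definition~\ref{defmaj} for $\p \preceq \q$ (after the usual zero-padding so the two vectors have the same length).

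I do not foresee a genuine obstacle: the argument is a one-shot counting step rather than an induction, and the only thing that needs a sentence of care is that the partition classes ordered as $q_1 \geq q_2 \geq \cdots$ need not align with the natural ordering of the $I_j$'s, which is precisely why the bound $\sum_{j\in T} q_j \leq \sum_{k=1}^{i} q_k$ (rather than an equality) is the right statement to invoke.
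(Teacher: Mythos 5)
Your proof is correct, and it takes a genuinely different route from the paper. The paper argues by induction on $i$: the base case uses the fact that $1$ belongs to some $I_j$ to get $q_1\geq p_1$, and the inductive step splits into two cases according to whether some class $I_j$ with $j\geq i$ contains an index $\ell\leq i$ (in which case $q_i\geq q_j\geq p_\ell\geq p_i$ is added to the inductive bound) or not (in which case $\{1,\dots,i\}\subseteq I_1\cup\cdots\cup I_{i-1}$ already gives the stronger inequality $\sum_{k=1}^{i-1}q_k\geq\sum_{k=1}^{i}p_k$). Your argument is non-inductive and has no case split: you fix $i$, define $T$ as the set of classes meeting $\{1,\dots,i\}$, observe the pigeonhole bound $|T|\leq i$ from disjointness, and then sandwich $\sum_{j\in T}q_j$ between $\sum_{k=1}^{i}p_k$ (because the classes in $T$ cover $\{1,\dots,i\}$ and all $p_k\geq 0$) and $\sum_{k=1}^{i}q_k$ (because $\q$ is sorted and all $q_k\geq 0$). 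What the paper's approach buys is locality: each inductive step only compares $q_i$ to a single $p_\ell$, which is the kind of element-wise comparison that generalizes well in other majorization arguments. What your approach buys is a shorter, case-free proof whose single counting step ($|T|\leq i$) makes the mechanism transparent; it also handles the zero-padded indices $i>m$ without any extra remark, since $T\subseteq\{1,\dots,m\}$ automatically and the right-hand sum of $\q$ is already $1$.
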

\begin{IEEEproof} 
We shall prove  by induction on $i$ that $\sum_{k=1}^{i}q_k\geq \sum_{k=1}^{i}p_k$.
Because  $\q$ is an aggregation of $\p$, we know that there exists    $I_j\subseteq \{1, \ldots , n\}$
such that $1\in I_j$. This implies that  $q_1\geq q_j\geq p_1$. Let us suppose that 
$\sum_{k=1}^{i-1}q_k\geq \sum_{k=1}^{i-1}p_k$. 
If there exist  indices $j\geq i$ and $\ell\leq i$ such that $\ell\in I_j$,
then $q_i\geq q_j\geq p_\ell\geq p_i$,  implying 
$\sum_{k=1}^{i}q_k\geq \sum_{k=1}^{i}p_k$.  
Should it be otherwise,  for each $j\geq i$ and $\ell\leq i$ it holds that $\ell\not \in I_j$. Therefore,
$\{1, \ldots ,i\}\subseteq I_1\cup \ldots \cup I_{i-1}$. This immediately gives
$\sum_{k=1}^{i-1}q_k\geq \sum_{k=1}^{i}p_k$, from which we 
get $\sum_{k=1}^{i}q_k\geq \sum_{k=1}^{i}p_k$.
\end{IEEEproof}

\bigskip

Let us now discuss some  consequences of above framework.
Given two discrete random variables $X$ and $Y$, with probability distributions
$\p=(p_1, \ldots , p_n)$ and $\q=(q_1, \ldots , q_m)$, respectively,
denote by $\CC(\p, \q)$ 
the set of all joint distributions of $X$ and $Y$ 
that have $\p$ and $\q$ as marginals (in the literature, elements of $\CC(\p, \q)$ are often called \emph{couplings} of 
$\p$ and $\q$). For our purposes,
each element in $\CC(\p, \q)$ can be seen as $n\times m$ matrix $M=[m_{ij}]\in \mathbb{R}^{n\times m}$
such that its row-sums give the elements of $\p$ and its column-sums give the elements of $\q$, that is, 
\begin{equation}\label{CC}
\CC(\p, \q)=\Bigl\{\mathbf{M}=[m_{ij}]: \sum_{j}m_{ij}=p_i, \sum_{i}m_{ij}=q_j\Bigr\}.
\end{equation}
Now, for any  $\mathbf{M}\in \CC(\p, \q)$, let us write its elements in a $1\times mn$ vector
$\mathbf{m}\in \PP_{mn}$, with its components ordered in non-increasing fashion.
From (\ref{CC}) we obtain  that both  $\p$ and $\q$ are  {aggregations}
of \emph{each } $\mathbf{m}\in \PP_{mn}$  obtained from some  $\mathbf{M}\in \CC(\p, \q)$. 
By Lemma \ref{pprecq}, we get that\footnote{Recall that we 
use the majorization 
relationship between vectors of unequal lenghts, by properly padding the shorter
one with the appropriate number of $0$'s at the end. This trick does not affect our 
subsequent results, since 
we use the customary assumption that $0\log\frac{1}{0}=0$.}
\begin{equation}\label{m<peq}
\mathbf{m}\preceq \p \quad \mbox{and} \quad \mathbf{m}\preceq \q.
\end{equation}
Recalling the definition and properties of the greatest lower bound
of two vectors in $\PP_{mn}$, we also obtain
\begin{equation}\label{m<pandq}
\mathbf{m}\preceq\p\land \q.
\end{equation}
From (\ref{m<pandq}), and the Schur-concavity of the Shannon entropy, we also obtain that
$$H(\mathbf{m})\geq H(\p\land \q).$$
Since, obviously, the entropy of $H(\mathbf{m})$ is equal to the entropy $H(\mathbf{M})$,
where $\mathbf{M}$ is the matrix in $\CC(\p, \q)$ from which the vector $\mathbf{m}$
was obtained, we get the following  important  result (for us).
\begin{lemma}\label{lemma:HM>Hpandq}
\emph{For any} $\p$ and $\q$, and  $\mathbf{M}\in \CC(\p, \q)$, it holds that 
\begin{equation}\label{eq:HM>Hpandq}
H(\mathbf{M})\geq H(\p\land \q).
\end{equation}
\end{lemma}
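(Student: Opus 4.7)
The plan is to reduce the matrix inequality to a statement about ordered probability vectors and then invoke the majorization machinery assembled earlier. First I would flatten the matrix $\mathbf{M} \in \CC(\p, \q)$ by listing its $mn$ entries as a single vector $\mathbf{m} \in \PP_{mn}$, sorted in non-increasing order. Since the entropy $H(\mathbf{M})$ only depends on the multiset of entries of $\mathbf{M}$, not on their position, the flattening preserves entropy: $H(\mathbf{M}) = H(\mathbf{m})$. So it suffices to prove $H(\mathbf{m}) \geq H(\p \land \q)$.

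Next I would exhibit $\p$ and $\q$ as aggregations of $\mathbf{m}$. This is immediate from the coupling constraints in (\ref{CC}): grouping the entries of $\mathbf{M}$ by row gives sums equal to the entries of $\p$, and grouping by column gives sums equal to the entries of $\q$. Invoking Lemma \ref{pprecq} (applied with the usual convention of padding the shorter vector with zeros) therefore yields $\mathbf{m} \preceq \p$ and $\mathbf{m} \preceq \q$ simultaneously.

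At this point the greatest lower bound comes into play. Since $\mathbf{m}$ lies below both $\p$ and $\q$ in the majorization order, the defining universal property of $\p \land \q$ as the greatest lower bound forces
\begin{equation*}
\mathbf{m} \preceq \p \land \q.
\end{equation*}
Finally, applying the Schur-concavity of the Shannon entropy (recalled just above the statement) to this relation gives $H(\mathbf{m}) \geq H(\p \land \q)$, which combined with $H(\mathbf{M}) = H(\mathbf{m})$ is exactly the claim.

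The only delicate point I anticipate is a bookkeeping issue: the vector $\mathbf{m}$ has length $mn$ while $\p$ and $\q$ have lengths $n$ and $m$, so the majorization comparisons happen between vectors of different lengths. This is handled by the zero-padding convention introduced earlier in the paper, which also preserves entropies under the convention $0 \log (1/0) = 0$; no substantive difficulty arises. All the real work has already been done in Fact \ref{glb}, Lemma \ref{pprecq}, and the Schur-concavity of $H$, so the proof is essentially an assembly of these three ingredients.
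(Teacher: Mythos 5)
Your proof matches the paper's argument exactly: flatten $\mathbf{M}$ into $\mathbf{m}$, use Lemma \ref{pprecq} via the aggregation observation to get $\mathbf{m}\preceq\p$ and $\mathbf{m}\preceq\q$, pass to $\mathbf{m}\preceq\p\land\q$ by the universal property of the greatest lower bound, and finish with Schur-concavity. The bookkeeping remark about zero-padding is the same convention the paper invokes in its own footnote.
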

Lemma \ref{lemma:HM>Hpandq} is one of the  key results towards our approximation algorithm
to find an  element $\mathbf{M}\in \CC(\p, \q)$ with entropy
$H(\mathbf{M})\leq OPT+1$, where 
$OPT=\min_{\mathbf{N}\in \CC(\p, \q)} H(\mathbf{N}).$

Before describing our algorithm, let us illustrate some interesting 
consequences of Lemma \ref{lemma:HM>Hpandq}.
It is well known that for any joint distribution of the two r.v. $X$ and $Y$ it
holds that 
\begin{equation}\label{trivial}
H(XY)\geq \max\{H(X), H(Y)\},
\end{equation}
or, \emph{equivalently}, 
for any $\mathbf{M}\in \CC(\p, \q)$ it
holds that 
$$H(\mathbf{M})\geq \max\{H(\p), H(\q)\}.$$
Lemma \ref{lemma:HM>Hpandq}
strengthens the  lower bound 
(\ref{trivial}). Indeed, since, by definition, it holds that
$\p\land \q\preceq \p$ and $\p\land \q\preceq \q$, by the Schur-concavity of
the entropy function and Lemma \ref{lemma:HM>Hpandq} we get  the (improved) lower bound
\begin{equation}\label{improved}
H(\mathbf{M})\geq H(\p\land \q)\geq \max\{H(\p), H(\q)\}.
\end{equation}
Inequality (\ref{improved}) also allows us to improve on  the 
classical  upper bound on the mutual information given by $I(X;Y)\leq \min\{H(X), H(Y)\},$
since (\ref{improved}) implies
\begin{equation}\label{improved2}
I(X;Y)\leq H(\p)+H(\q)-H(\p\land \q)\leq \min\{H(X), H(Y)\}.
\end{equation}
The new bounds are \emph{strictly} better than the usual ones, whenever
$\p\not\preceq\q$ and $\q\not\preceq\p$.
Technically, one could 
improve them even more, by using the 
inequality $H(\x)\geq H(\y)+D(\y||\x)$, whenever $\x\preceq \y$
 \cite{HV}. However,   in this paper we just need what we can get from 
 the inequality  $H(\x)\geq H(\y)$, whenever  $\x\preceq \y$ holds.
 
{Inequalities (\ref{improved}) and (\ref{improved2}) could be useful also in other contexts,  when
one needs to bound the joint entropy (or the mutual information) of two
r.v. $X$ and $Y$,
and the only available knowledge is given by  the marginal distributions of $X$ and $Y$
(and not their  joint distribution).
}

\section{Approximating \ \ $OPT=\min_{\mathbf{N}\in \CC(\p, \q)} H(\mathbf{N}).$}
In this section we present an algorithm that, having in input distributions
$\p$ and $\q$, constructs an $\mathbf{M}\in \CC(\p, \q)$ such that 
\begin{equation}\label{eq:HM<Hpandq+1}
H(\mathbf{M})\leq H(\p\land \q)+1.
\end{equation}
 Lemma \ref{lemma:HM>Hpandq} will imply  that
$$ H(\mathbf{M})\leq \min_{\mathbf{N}\in \CC(\p, \q)} H(\mathbf{N})+1. $$
We need to introduce some additional notations and state some properties 
which will be used in the description of  our algorithm.

\begin{definition} \label{defi:inversions}
Let $\p = (p_1, \dots, p_n)$ and $\q = (q_1, \dots, q_n)$ be two probability distributions in $\PP_n$. 
We assume that   
for the maximum $i \in \{1, \dots, n\}$ such that $p_i \neq q_i$---if it exists---it holds that $p_i > q_i.$\footnote{Notice that up to swapping the role 
of $\p$ and $\q$, the definition applies to any pair of {\em distinct} distributions.}
Let $k$ be the minimum integer such that there are  
indices $i_0 = n+1 > i_1 > i_2 > \cdots > i_k = 1$ satisfying the following conditions for each $s = 1, \dots, k$:
\begin{itemize}
\item  if $s$ is odd, then $i_s$ is the minimum index smaller than $i_{s-1}$ such that $\sum_{k=i}^n p_k \geq \sum_{k=i}^n q_k$ 
holds for each $i = i_s, i_s+1, \dots, i_{s-1}-1$;
\item  if $s$ is {even}, then $i_s$ is the minimum index smaller than $i_{s-1}$ such that $\sum_{k=i}^n p_k {\leq} \sum_{k=i}^n q_k$ 
holds for each $i = i_s, i_s+1, \dots, i_{s-1}-1$.
\end{itemize}

\medskip
We refer to the integers $i_0, i_1, \dots, i_k$ as the {\em inversion points} of $\p$ and  $\q$.\footnote{If $\p = \q$ 
then we have $k = 1$ and $i_1 = 1.$}

\end{definition}

%

\begin{fact} \label{fact:p-q-z}
Let $\p$ and $\q$ be two probability distributions in 
$\PP_n$ and $i_0 = n+1 > i_1 \cdots > i_k = 1$ be their inversion points. Let $\z =  \p \wedge \q.$
Then the following relationships hold:
\begin{enumerate}
\item for each odd $s \in \{1, \dots, k\}$ and $i \in \{i_s, \dots, i_{s-1}-1\}$
\begin{equation} \label{eq:z-p-q-suffix-sums1}
\sum_{k = i}^n z_k = \sum_{k=i}^n p_k
\end{equation}
\item for each even $s \in \{1, \dots, k\}$ and $i\in \{i_s, \dots, i_{s-1}-1\}$
\begin{equation} \label{eq:z-p-q-suffix-sums2}
\sum_{k = i}^n z_k = \sum_{k=i}^n q_i
\end{equation}
\item for each odd $s \in \{1, \dots, k\}$ and $i\in \{i_s, \dots, i_{s-1}-2\}$ we have $z_i = p_i$
\item for each even $s \in \{1, \dots, k\}$ and $i\in \{i_s,  \dots, i_{s-1}-2\}$ we have $z_i = q_i$
\item for each odd $s \in \{0, \dots, k-1\}$
\begin{equation} \label{eq:z-p-q-inv-odd1}
z_{i_s-1} = q_{i_s-1} - \left( \sum_{k=i_s}^n p_k - \sum_{k=i_s}^n q_k \right) \geq p_{i_s-1}
\end{equation}
\item for each even $s \in \{0, \dots, k-1\}$
\begin{equation} \label{eq:z-p-q-inv-odd2}
z_{i_s-1} = p_{i_s-1} - \left( \sum_{k=i_s}^n q_k - \sum_{k=i_s}^n p_k \right) \geq q_{i_s-1}
\end{equation}
\end{enumerate}
\end{fact}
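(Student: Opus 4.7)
The plan is to reduce everything to a single suffix-sum identity for $\z = \p \wedge \q$ and then read off all six items. Set $P_i = \sum_{k=i}^n p_k$ and $Q_i = \sum_{k=i}^n q_k$, with $P_{n+1} = Q_{n+1} = 0$. The formula in Fact \ref{glb} gives directly $\sum_{j=1}^i z_j = \min\{1-P_{i+1},\,1-Q_{i+1}\}$, and since $\p,\q,\z$ all sum to $1$, taking complements yields the key identity
\begin{equation*}
\sum_{k=i}^n z_k \;=\; \max\{P_i,\,Q_i\}, \qquad i = 1, \dots, n+1.
\end{equation*}

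With this identity in hand, items (1) and (2) are immediate from Definition \ref{defi:inversions}: on the interval $\{i_s, \dots, i_{s-1}-1\}$ the sign of $P_i - Q_i$ is $\geq 0$ for odd $s$ and $\leq 0$ for even $s$, so the $\max$ is realised by $P_i$ or $Q_i$ respectively. Items (3) and (4) follow by subtracting consecutive suffix sums: when both $i$ and $i+1$ lie in $\{i_s, \dots, i_{s-1}-1\}$, i.e.\ when $i \in \{i_s, \dots, i_{s-1}-2\}$, the two sums $\sum_{k=i}^n z_k$ and $\sum_{k=i+1}^n z_k$ are expressed via the same one of $P$ or $Q$, whose difference is $p_i$ or $q_i$.

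For items (5) and (6), the extra ingredient is the minimality of $i_s$. For odd $s \in \{1,3,\dots\}\cap\{1,\dots,k-1\}$, minimality of $i_s$ says that the suffix-sum condition $P_j \geq Q_j$ does not hold on the extended interval $\{i_s - 1, \dots, i_{s-1}-1\}$; since it does hold on $\{i_s, \dots, i_{s-1}-1\}$, the failure must occur at $j = i_s - 1$, so $P_{i_s-1} < Q_{i_s-1}$. Consequently the suffix-sum identity gives $\sum_{k=i_s-1}^n z_k = Q_{i_s-1}$, while item (1) gives $\sum_{k=i_s}^n z_k = P_{i_s}$; subtracting and using $Q_{i_s-1} = q_{i_s-1} + Q_{i_s}$ produces the formula in (5), and the inequality $z_{i_s-1} \geq p_{i_s-1}$ is just a restatement of $Q_{i_s-1} \geq P_{i_s-1}$. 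Item (6) is proved symmetrically for even $s \in \{2,4,\dots\}\cap\{0,\dots,k-1\}$.

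The only case left is $s = 0$ in item (6), where the minimality argument has no previous region to compare against. Here $i_0 = n+1$, so $P_{i_0} = Q_{i_0} = 0$, the formula collapses to $z_n = p_n$, and the inequality $z_n \geq q_n$ is exactly the standing assumption $p_n \geq q_n$ (or $\p = \q$) made in Definition \ref{defi:inversions}. The main (and quite minor) obstacle throughout is bookkeeping: keeping the odd/even alternation aligned with the roles of $P$ and $Q$, and noticing that the $s=0$ boundary is absorbed by the standing assumption rather than by minimality of an inversion point.
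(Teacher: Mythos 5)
Your proof is correct and follows essentially the same route as the paper's: the paper likewise reduces everything to the suffix-sum identity $\sum_{k=i}^n z_k = \max\{\sum_{k=i}^n p_k,\,\sum_{k=i}^n q_k\}$ derived from Fact \ref{glb}, deduces items (1)--(2) from the definition of inversion points, and then states that the remaining items are "easily derived by simple algebraic calculations." You have merely filled in those calculations explicitly — including the minimality argument at $i_s-1$ for items (5)--(6) and the $s=0$ boundary case resolved by the standing assumption $p_n \geq q_n$ — which the paper omits.
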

\begin{IEEEproof}
By Fact \ref{glb} it holds that 
$$\sum_{k=1}^{i} z_k = \min \Bigl\{\sum_{k=1}^{i} p_k , \sum_{k=1}^{i} q_k  \Bigr\}.$$
 Equivalently, using 
$\sum_k z_k = \sum_k p_k = \sum_k q_k = 1,$ we have 
that for each $i = 1, \dots, n,$ it holds that 
$$\sum_{k=i}^{n} z_k = \max \Bigl\{\sum_{k=i}^{n} p_k , \sum_{k=i}^{n} q_k  \Bigr\}.$$
This, together with 
the definition of the inversion indices $i_0, \dots, i_k$, imply properties 1)\ and 2).
The remaining properties are easily derived from 1) and 2) by simple algebraic calculations. 
\end{IEEEproof}




\begin{lemma} \label{lemma:pezzettini}
Let $A$ be a multiset of non-negative real numbers and $z$ a positive real number such that $z \geq y$ for each $y \in A.$ 
For any $x \geq 0$ such that $x \leq z + \sum_{y \in A} y$  there exists a subset $Q \subseteq A$ 
and $0 \leq z^{(d)} \leq z$ such that  $$z^{(d)} + \sum_{y \in Q} y  = x.$$ Moreover, $Q$ and $z^{(d)}$ can 
be computed in linear time. 
\end{lemma}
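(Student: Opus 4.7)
The plan is to exhibit a one-pass greedy procedure that produces the decomposition and then argue its correctness in just a few lines. Enumerate the elements of $A$ in arbitrary order as $y_1,y_2,\ldots,y_{|A|}$, initialize $Q \leftarrow \emptyset$ and a running total $\sigma \leftarrow 0$, and for each $y_i$ in turn put $y_i$ into $Q$ (and update $\sigma \leftarrow \sigma + y_i$) exactly when $\sigma + y_i \leq x$; otherwise discard $y_i$ and move on. At the end, set $z^{(d)} = x - \sigma$. Clearly $z^{(d)} + \sum_{y \in Q} y = x$, so only the inequalities $0 \leq z^{(d)} \leq z$ need to be checked.

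The lower bound $z^{(d)} \geq 0$ is built into the selection rule: $\sigma$ is incremented only when it remains $\leq x$, so the invariant $\sigma \leq x$ holds throughout. The substance of the proof lies in the upper bound $z^{(d)} \leq z$, and I would handle it with a short case split. If the greedy never rejected an element then $\sigma = \sum_{y \in A} y$, and the hypothesis $x \leq z + \sum_{y \in A} y$ immediately yields $z^{(d)} = x - \sigma \leq z$. Otherwise, let $y_j$ be any rejected element and let $\sigma_j$ denote the value of the running sum at the moment $y_j$ was processed. Rejection means $\sigma_j + y_j > x$, hence $y_j > x - \sigma_j \geq x - \sigma$ because $\sigma_j \leq \sigma$ (the running sum is monotone nondecreasing). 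If we had $z^{(d)} = x - \sigma > z$, this would give $y_j > z$, contradicting the assumption $z \geq y$ for all $y \in A$. Thus $z^{(d)} \leq z$ in both cases.

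The linear-time bound is automatic: each element of $A$ is examined once and only constant-time arithmetic (one comparison, one addition, one assignment) is done per element, so the total work is $O(|A|)$. I do not foresee a serious obstacle; the only conceptual point is recognizing where the hypothesis $z \geq \max A$ enters, namely in bounding the final slack by way of the pessimistic element that the greedy had to drop. Because the order of processing is immaterial for the argument, the same lemma will be usable later even when the elements of $A$ arise in whatever order is convenient for the outer algorithm.
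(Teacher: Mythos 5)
Your proposal is correct and follows essentially the same greedy linear-scan approach as the paper's proof: both accumulate elements of $A$ into $Q$ while the running sum stays below $x$, and both invoke the hypothesis $z \ge \max A$ at the same point to bound the final slack $z^{(d)}$ by the weight of an element the greedy declined to include. The only minor difference is that the paper's procedure stops at the first element that would push the sum past $x$, whereas your greedy skips that element and keeps scanning; both variants are valid and run in $O(|A|)$ time.
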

\begin{IEEEproof}
\remove{
The key  is to show that under the hypotheses of the lemma, 
there exists a (possibly empty) subset $Q \subseteq A$ such that $\sum_{y \in Q} y < x$ and 
$z + \sum_{y \in Q} y \geq x$ and that such a $Q$ can be determined in linear time. 
Then, setting $z^{(d)} = x - \sum_{y \in Q} y$ we get the desired result. 

}
If $\sum_{y \in A} y < x,$  we get $Q = A$ and the desired result directly follows from the assumption that $z + \sum_{y \in A} y \geq x.$ Note that 
the condition can be checked in linear time.

Let us now assume that $\sum_{y \in A} y \geq x.$
Let   $y_1, \dots, y_k$ be the elements of $P$. 
Let $i$ be the minimum index such that
$\sum_{j=1}^i y_j \geq x.$ Then setting $Q = \{y_1, \dots, y_{i-1}\}$  (if $i = 1,$ we set $Q = \emptyset$) 
and using the assumption that $z \geq y_i$
we have the desired result. Note that also in this case the index $i$ which determines $Q = \{y_1, \dots, y_{i-1}\},$ can be found in linear time. 
\end{IEEEproof}

\medskip

This lemma is a major technical  tool of our main algorithm. 
We present a procedure implementing the 
construction of the the split of $z$ and the 
set $Q$ in \textbf{Algorithm \ref{algo:Lemma}}. 

\begin{algorithm}[ht!]
\small
{\sc  Min-Entropy-Joint-Distribution}($\p , \q$)\\ 
{\bf Input:} prob.\ distributions $\p = (p_1, \dots, p_n)$ and $\q= (q_1, \dots, q_n)$\\
{\bf Output:} An $n \times n$ matrix $\mathbf{M} = [m_{i\,j}]$ s.t.\ $\sum_{j} m_{i\,j} = p_i$ and $\sum_{i} m_{i\,j} = q_j.$
\begin{algorithmic}[1]
\STATE{{\bf for} $i=1,\dots, n$ and $j=1, \dots, n$  {\bf set} $m_{i \, j} \leftarrow 0$} \label{ini:1}
\STATE{{\bf for} $i=1,\dots, n$  {\bf set} $R[i] \leftarrow 0,\, C[i] \leftarrow 0$} \label{ini:2}
\STATE{{\bf if} $\p \neq \q$, let $i = \max \{j \mid p_j \neq q_j\}$; {\bf if} $p_i < q_i$ {\bf swap} $\p \leftrightarrow \q$} \label{ini:3}
\STATE{Let $i_0 = n+1 > i_1 > i_2 > \cdots > i_k = 1$ be the inversion indices of $\p$ and $\q$  as by Definition \ref{defi:inversions}} \label{ini:4}
\STATE{$\z = (z_1, \dots, z_n) \leftarrow \p \land \q$}
\FOR{$s = 1$ {\bf to} $k$} \label{main:1}
\IF{$s$ is odd} 
\FOR{$j = i_{s-1}-1$ {\bf downto} $i_s$}  \label{int2.1-start}
\STATE{($z_{j}^{(d)},  z_{j}^{(r)}, Q$) $\leftarrow$  {\sc  Lemma\ref{lemma:pezzettini}}($z_j, q_j, R[j+1 \dots i_{s-1}-1]$)}   
\FOR{{\bf each } $\ell \in Q$}   \label{inner-for1-start}
\STATE{$m_{\ell\, j} \leftarrow R[\ell]; \, R[\ell] \leftarrow 0$} \label{m-change1}
\ENDFOR   
\STATE{$m_{j\, j} \leftarrow z^{(d)}_j, \, R[j] \leftarrow z^{(r)}_j$} \label{inner-for1-end} \label{m-change2}
\ENDFOR  \label{int2.1-end}
\IF{$i_s \neq 1$} \label{if1}
\FOR{{\bf each} $\ell \in [i_s\dots i_{s-1}-1]$ s.t. $R[\ell] \neq 0$}  \label{inversion-sums1-start}
\STATE{$m_{\ell\, i_{s}-1} \leftarrow R[\ell]; \, \, R[\ell] \leftarrow 0$} \label{inversion-sums1-end} \label{m-change3}
\ENDFOR
\ENDIF   
\ELSE   \label{int2.1-if}
\FOR{$j = i_{s-1}-1$ {\bf downto} $i_s$} \label{int2.2-start}
\STATE{($z_{j}^{(d)},  z_{j}^{(r)}, Q$) $\leftarrow$  {\sc  Lemma\ref{lemma:pezzettini}}($z_j, p_j, C[j+1 \dots i_{s-1}-1]$)}   
\FOR{{\bf each } $\ell \in Q$} \label{inner-for2-start}
\STATE{$m_{j \, \ell} \leftarrow C[\ell], \, C[\ell] \leftarrow 0$} \label{m-change4}
\ENDFOR 
\STATE{$m_{j\, j} \leftarrow z^{(d)}_j, \, C[j] \leftarrow z^{(r)}_j$} \label{inner-for2-end} \label{m-change5}
\ENDFOR  \label{int2.2-end}
\IF{$i_s \neq 1$}   \label{if2}
\FOR{{\bf each} $\ell \in [i_s\dots i_{s-1}-1]$ s.t. $C[\ell] \neq 0$}  \label{inversion-sums2-start}
\STATE{$m_{i_{s}-1 \, \ell} \leftarrow C[\ell]; \, \, C[\ell] \leftarrow 0$} \label{inversion-sums2-end} \label{m-change6}
\ENDFOR
\ENDIF \label{int2.2-if}
\ENDIF
\ENDFOR
\end{algorithmic}
\caption{The Min Entropy Joint Distribution Algorithm}
\label{algo:Algonew}
\end{algorithm}

\begin{algorithm}[ht!]
\small
{\sc  Lemma\ref{lemma:pezzettini}}($z, x, A[i\dots j]$)\\ 
{\bf Input:} reals $z > 0, \, x \geq 0,$ {  and $A[i \dots j] \geq 0$ 
s.t.  $\sum_{k} A[k] + x \geq z$ } \\
{\bf Output:} $z^{(d)}, z^{(r)} \geq 0,$ and $Q \subseteq \{i, i+1, \dots, j\}$ s.t. $z^{(d)} + z^{(r)} = z,$ and $z^{(d)} + \sum_{\ell \in Q} A[\ell] = x.$

\begin{algorithmic}[1]
\STATE{$k \leftarrow i, \, Q \leftarrow \emptyset, \, sum \leftarrow 0$}  
\WHILE{$k \leq j$ {\bf and} $sum + A[k] < x$}
\STATE{$Q \leftarrow Q \cup \{k\}, \, sum \leftarrow sum + A[k], \, k \leftarrow k+1$} 
\ENDWHILE
\STATE{$z^{(d)} \leftarrow x - sum, \, z^{(r)} \leftarrow z - z^{(d)}$}
\STATE{{\bf return} ($z^{(d)}, z^{(r)}, Q$)}
\end{algorithmic}
\caption{The procedure implementing Lemma \ref{lemma:pezzettini}}
\label{algo:Lemma}
\end{algorithm}

By padding the probability distributions with the appropriate
number of $0$'s, we can assume that both  $\p,\q\in \PP_n$. We are now ready to present our main algorithm.
 The pseudocode is presented is given below (\textbf{Algorithm 1}).
An informal description of it, that gives also the intuition behind its functioning, is 
presented in subsection \ref{informal}. 

The following theorem shows the correctness of \textbf{Algorithm \ref{algo:Algonew}}. It relies on 
a sequence of technical results, Lemmas \ref{lemmata:prop1} and \ref{lemmata:prop3} 
and Corollaries \ref{lemmata:prop2} and \ref{lemmata:prop4},  whose statements are deferred to the end of this section.

\begin{theorem} \label{theo:correctness}
For any pair of probability distributions $\p, \q \in \PP_n$ the output of Algorithm 
\ref{algo:Algonew} is an $n \times n$ matrix $\mathbf{M} = [m_{i\, j}] \in \CC(\p, \q)$ i.e., such that 
$\sum_{j} m_{i\, j} = p_i$ and $\sum_{i} m_{i\, j} = q_j.$  
\end{theorem}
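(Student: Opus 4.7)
The plan is to prove correctness by induction on the inversion-interval index $s$, showing that after the $s$-th iteration of the outer \textbf{for} loop the matrix entries placed so far, together with the contents of the reserve arrays $R[\cdot]$ and $C[\cdot]$, collectively encode the correct remaining row and column targets. The two parity branches are dual, so I would present the argument for odd $s$ and state the even case by symmetry.

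First, I would verify that every call to Algorithm \ref{algo:Lemma} is legal. Lemma \ref{lemma:pezzettini} requires $z \geq y$ for each $y$ in the multiset and $z + \sum y \geq x$. The first condition follows from a simple monotonicity observation: $\z = \p \wedge \q$ lies in $\PP_n$, so $z_j \geq z_\ell$ whenever $\ell > j$, and by construction each nonzero $R[\ell]$ (resp.\ $C[\ell]$) equals $z^{(r)}_\ell \le z_\ell$ at the time of the call. The second condition is established jointly with the main invariant.

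Next I would set up the loop invariant for odd $s$: after processing column $j \in \{i_s,\dots,i_{s-1}-1\}$,
\begin{itemize}
\item[(a)] $\sum_\ell m_{\ell\, j} = q_j$, which is immediate from the output guarantee $z^{(d)}_j + \sum_{\ell\in Q} R[\ell] = q_j$ of Lemma \ref{lemma:pezzettini};
\item[(b)] for each $\ell \in \{j,\dots,i_{s-1}-1\}$, $R[\ell]$ equals the as-yet-unplaced mass of row $\ell$, i.e., $R[\ell] + \sum_{j'\ge j} m_{\ell\, j'}$ equals the portion of $p_\ell$ already accounted for;
\item[(c)] $z_j + \sum_{\ell > j} R[\ell] \ge q_j$, so the precondition of Lemma \ref{lemma:pezzettini} holds.
\end{itemize}
Invariant (c) is where Fact \ref{fact:p-q-z} enters: in an odd interval, $\sum_{k \ge j} z_k = \sum_{k \ge j} p_k \geq \sum_{k\geq j} q_k$, and invariant (b) translates the reserve mass $\sum_{\ell>j} R[\ell]$ into exactly the part of $\sum_{k>j}z_k$ that has not yet been deposited, yielding the needed inequality. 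Invariants (a) and (b) propagate inductively from one iteration of the inner \textbf{for} loop to the next using the deterministic updates on lines \ref{m-change1}--\ref{m-change2}.

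The bookkeeping across interval boundaries is the technical heart of the argument and the step I expect to be the main obstacle. At the end of an odd interval $s$, the dump on lines \ref{inversion-sums1-start}--\ref{inversion-sums1-end} writes each residual $R[\ell]$ into column $i_s-1$, which is the top of the next (even) interval. Using Fact \ref{fact:p-q-z}(5) one has $z_{i_s-1} = q_{i_s-1} - \bigl(\sum_{k\ge i_s} p_k - \sum_{k\ge i_s} q_k\bigr) \geq p_{i_s-1}$, and I would show that the quantity $\sum_{k\ge i_s} p_k - \sum_{k\ge i_s} q_k$ equals precisely the total residual $\sum_\ell R[\ell]$ being dumped into column $i_s-1$. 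Combined with the first iteration of the even interval, which places $m_{i_s-1,\,i_s-1} = p_{i_s-1}$ and stores $C[i_s-1] = z_{i_s-1} - p_{i_s-1}$ (to be either redistributed within the even interval or dumped at its boundary), the two contributions exactly sum to $q_{i_s-1}$, completing column $i_s-1$'s target. The dual identity (6) gives the symmetric statement for even-to-odd transitions.

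Finally I would handle the terminal interval $s=k$, where $i_k=1$ and the \textbf{if} guards on lines \ref{if1} and \ref{if2} suppress the dump. There one uses $\sum_k z_k = \sum_k p_k = \sum_k q_k = 1$ to conclude that, at the last iteration $j=1$, the reserve arrays are completely consumed, so invariant (b) forces all remaining row (or column) targets to be met exactly. Combining the interval-wise inductive step with this base case, every row and column of $\mathbf{M}$ has its prescribed sum, proving $\mathbf{M}\in\CC(\p,\q)$. The auxiliary Lemmas \ref{lemmata:prop1}, \ref{lemmata:prop3} and Corollaries \ref{lemmata:prop2}, \ref{lemmata:prop4} referenced in the theorem presumably package exactly the suffix-sum identities and residual-mass accounting used in Steps (b), (c), and the boundary transition.
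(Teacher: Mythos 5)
Your plan is essentially the paper's own argument, just stated at the level of a proof sketch rather than a fully detailed chain of lemmas. The paper proves the theorem by citing Corollaries \ref{lemmata:prop2} and \ref{lemmata:prop4}, whose content is developed in Lemmas \ref{lemmata:prop1} and \ref{lemmata:prop3}; these are exactly your invariants (a)--(c) (column sums met so far, reserve array $R$ holds the unplaced row mass, suffix-sum inequality to license the call to Lemma \ref{lemma:pezzettini}), your boundary accounting via Fact \ref{fact:p-q-z}(5)--(6), and your terminal-interval total-mass argument showing the reserves vanish at $j=1$. One ingredient the paper states explicitly in Lemma \ref{lemmata:prop1} that your sketch leaves implicit is the structural invariant (i) that $m_{\ell\,c}=0$ whenever $\min\{\ell,c\}<j$: without it, the output guarantee of Lemma \ref{lemma:pezzettini} does not ``immediately'' give the full column sum $\sum_\ell m_{\ell\,j}=q_j$, since one must know no stray mass was deposited earlier into column $j$. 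This is a bookkeeping detail you would have to add when fleshing out the plan, not a difference in approach.
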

\begin{IEEEproof}
Let $k$ be the value of $s$ in the last iteration of the line 6. {\bf for}-loop of the
algorithm. The desired result directly follows by Corollaries \ref{lemmata:prop2} and 
\ref{lemmata:prop4} according to whether $k$ is odd or even, respectively.
\end{IEEEproof}

\medskip
We now prove our main result.

\begin{theorem}
For any $\p, \q \in \PP_n$, Algorithm \ref{algo:Algonew} outputs {\em in 
polynomial time} an $\mathbf{M}\in \CC(\p, \q)$ such that 
\begin{equation} \label{eq:maintheo}
H(\mathbf{M})\leq H(\p\land \q)+1. 
\end{equation}
\end{theorem}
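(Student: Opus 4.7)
The plan is to combine the correctness result (Theorem~\ref{theo:correctness}) with two additional observations: a complexity bound on the running time, and a structural observation about how the entries of $\mathbf{M}$ relate to the entries of $\z = \p \land \q$. The key insight is that $\mathbf{M}$ is built by splitting each component $z_j$ of $\z$ into at most \emph{two} nonnegative pieces which are then scattered to (a diagonal and a single off-diagonal) entry of $\mathbf{M}$; no residual is ever split again. Combined with Lemma~\ref{lemma:HM>Hpandq}, the entropy bound (\ref{eq:maintheo}) will immediately give $H(\mathbf{M}) \leq OPT+1$.

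First, I would dispose of the polynomial-time claim. Each phase $s$ of the outer {\bf for}-loop processes the indices in $[i_s, i_{s-1}-1]$; within each iteration, the call to {\sc Lemma\ref{lemma:pezzettini}} runs in linear time by Lemma~\ref{lemma:pezzettini}, and the cleanup loops at the end of the phase run in $O(n)$. Since the phases partition $\{1,\ldots,n\}$, the total work is $O(n^2)$.

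Next, I would establish the crucial \textbf{binary-refinement} structure: if we list the nonzero entries of $\mathbf{M}$ as a probability vector $\m$ (in some arbitrary order), then $\m$ is obtained from $\z$ by replacing each $z_j$ with two nonnegative summands $z_j^{(d)}$ and $z_j^{(r)}$ satisfying $z_j^{(d)}+z_j^{(r)}=z_j$. This is verified by inspecting Algorithm~\ref{algo:Algonew}: at iteration $j$ of phase $s$, the call to {\sc Lemma\ref{lemma:pezzettini}} writes $z_j^{(d)}$ into the diagonal entry $m_{j,j}$ and stores $z_j^{(r)}$ in the register $R[j]$ (resp.\ $C[j]$); this residual is later copied---\emph{unchanged}---either into some $m_{j,\ell}$ during a subsequent Lemma~\ref{lemma:pezzettini} call, or into $m_{j, i_s-1}$ by the cleanup loop of lines~\ref{inversion-sums1-start}--\ref{inversion-sums1-end} (symmetrically for the even case). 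Correctness (Theorem~\ref{theo:correctness}) guarantees that every residual is eventually written out, so no contribution of $z_j$ is lost and none is split into more than two pieces.

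Given this structure, a standard calculation shows that splitting a single mass $z_j$ into $z_j^{(d)}$ and $z_j^{(r)}=z_j-z_j^{(d)}$ increases the entropy by exactly $z_j\, h_2\!\bigl(z_j^{(d)}/z_j\bigr)$, where $h_2(\alpha)=-\alpha\log\alpha-(1-\alpha)\log(1-\alpha)$ is the binary entropy function. Summing over $j$ and using $h_2\leq 1$ together with $\sum_{j=1}^n z_j = 1$, one obtains
\begin{equation}
H(\mathbf{M}) \;=\; H(\m) \;=\; H(\z) + \sum_{j=1}^n z_j\, h_2\!\left(z_j^{(d)}/z_j\right) \;\leq\; H(\p\land \q) + 1,
\end{equation}
which is exactly (\ref{eq:maintheo}). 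The main obstacle is the second step: the precise bookkeeping needed to argue that every nonzero entry of $\mathbf{M}$ is one of the two pieces of a unique $z_j$, and that every piece ends up somewhere in $\mathbf{M}$. This is essentially the content of the technical lemmas (Lemmas~\ref{lemmata:prop1}, \ref{lemmata:prop3} and Corollaries~\ref{lemmata:prop2}, \ref{lemmata:prop4}) invoked in the proof of Theorem~\ref{theo:correctness}, so once correctness is in hand the refinement property follows by tracing the writes to $\mathbf{M}$ in the pseudocode. The last line combining the bound with Lemma~\ref{lemma:HM>Hpandq} to conclude $H(\mathbf{M})\leq OPT+1$ is then immediate.
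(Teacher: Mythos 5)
Your proof is correct and follows essentially the same route as the paper: you identify the same binary-refinement structure (each component $z_j$ of $\z=\p\land\q$ split into at most two pieces $z_j^{(d)}, z_j^{(r)}$, tracked through the diagonal writes and the residual registers $R$/$C$), invoke Theorem~\ref{theo:correctness} for membership in $\CC(\p,\q)$, and bound the time by $O(n^2)$. The only cosmetic difference is that you bound the entropy via the exact chain-rule identity $H(\m)=H(\z)+\sum_j z_j\,h_2(z_j^{(d)}/z_j)$ followed by $h_2\le 1$, whereas the paper invokes Jensen's inequality directly; these are equivalent, and your phrasing is arguably tidier.
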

\begin{IEEEproof}
%
It is not hard to see that the non-zero entry of the matrix  $\mathbf{M}$  are all fixed in 
lines \ref{inner-for1-end} and \ref{inner-for2-end}---in fact, for the assignments  
in lines \ref{inversion-sums1-end} and \ref{inversion-sums2-end} the algorithm uses values
stored in $R$ or $C$ which were fixed at some point earlier in lines \ref{inner-for1-end} and \ref{inner-for2-end}.
Therefore, all the final non-zero entries of $\mathbf{M}$ can be partitioned into $n$ pairs $z_j^{(r)}, z_j^{(d)}$ with 
$z_j^{(r)} + z_j^{(d)} = z_j$ for $j = 1, \dots, n$. By using the standard assumption $0 \log \frac{1}{0} = 0$ and 
applying Jensen inequality we have
\begin{eqnarray*}
H(\mathbf{M}) &=& 
\sum_{j=1}^n z_j^{(r)} \log \frac{1}{z_j^{(r)}} + z_j^{(d)} \log \frac{1}{z_j^{(d)}} \\
&\leq& \sum_{j=1}^n \frac{z_j}{2} \log \frac{2}{z_j} = H(\z) + 1
\end{eqnarray*}
which concludes the proof of the additive approximation guarantee of Algorithm \ref{algo:Algonew}.
Moreover, one can see that Algorithm \ref{algo:Algonew} can be implemented so to run in $O(n^2)$ time.
For the time complexity of the algorithm we observe the following easily verifiable fact:
\begin{itemize}
\item the initialization in lines \ref{ini:1}-\ref{ini:2} takes $O(n^2)$;
\item the condition in line \ref{ini:3} can be easily verified in $O(n)$ which is also the complexity of 
swapping $\p$ with $\q$, if needed;
\item the computation of the inversion points of $\p$ and $\q$ in line \ref{ini:4} can be 
performed in $O(n)$ following Definition \ref{defi:inversions}, once the suffix sums 
$\sum_{j=k}^n p_j, \, \sum_{j=k}^n q_j$ ($k = 1, \dots, n$) have been precomputed (also doable in $O(n)$);
\item the vector $z = \p \land \q$ can be computed in $O(n)$, e.g., based on the precomputed suffix sums; 
\item in the main body of the algorithm, the most expensive parts are the 
calls to the procedure {\bf Lemma3}, and the {\bf for}-loops in lines \ref{inner-for1-start}, 
\ref{inner-for2-start}, \ref{inversion-sums1-start}, and \ref{inversion-sums2-start}. All these take $O(n)$ and 
it is not hard to see that they are executed at most $O(n)$ times (once per component of $\z$). Therefore, 
the main body of the algorithm in lines \ref{main:1}-\ref{int2.2-if} takes $O(n^2)$.
\end{itemize}
Therefore we can conclude that the time complexity of Algorithm \ref{algo:Algonew} is polynomial in $O(n^2).$
\end{IEEEproof}

\subsection{The analysis of correctness of Algorithm \ref{algo:Algonew}: technical lemmas} \label{sec:lemmata}
In this section we state four technical lemmas we used for the analysis of Algorithm \ref{algo:Algonew} which 
leads to Theorem \ref{theo:correctness}. 
In the Appendix, we give  a numerical example 
of an execution of \textbf{Algorithm 1}.

\begin{lemma} \label{lemmata:prop1}
{At the end of each iteration of the {\bf for}-loop of lines \ref{int2.1-start}-\ref{int2.1-end} in Algorithm  \ref{algo:Algonew} 
($s = 1, \dots, k,$ and $i_s\leq j <i_{s-1}$)}
we have {(i) $m_{\ell\, c} = 0$ for each $\ell, c$ such that $\min\{\ell, c\} < j$; 
and (ii)} for each $j' = j, \dots, i_{s-1}-1$
\begin{equation}\label{prop1:eq}
\sum_{k \geq i_s} m_{k \, j'} = q_{j'}, \qquad \mbox{ and } \qquad R[j'] + \sum_{k \geq i_s} m_{j'\, k} = p_{j'}. 
\end{equation}
\end{lemma}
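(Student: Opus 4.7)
My plan is to establish the invariants (i) and (ii) by nested induction: an outer induction on the odd values of $s$ handled by lines \ref{int2.1-start}--\ref{int2.1-end}, and for fixed odd $s$, a reverse induction on $j$ as it descends from $i_{s-1}-1$ down to $i_s$. Conceptually, Lemmas \ref{lemmata:prop1}, \ref{lemmata:prop3} and Corollaries \ref{lemmata:prop2}, \ref{lemmata:prop4} form one mutually-inductive chain: the outer base case $s=1$ rests on the zero initialization in lines \ref{ini:1}--\ref{ini:2}, while for odd $s>1$ the state at the start of the inner loop is summarized by Corollary \ref{lemmata:prop4} applied to the previous even outer iteration $s-1$.

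For the inner base case ($j=i_{s-1}-1$), the array $R[i_{s-1}\ldots i_{s-1}-1]$ passed to Lemma \ref{lemma:pezzettini} is empty, so the only non-trivial precondition is $z_{i_{s-1}-1}\geq q_{i_{s-1}-1}$, which is furnished by Fact \ref{fact:p-q-z}(6) applied at even index $s-1$. The call returns $z_{i_{s-1}-1}^{(d)}=q_{i_{s-1}-1}$ and $z_{i_{s-1}-1}^{(r)}=z_{i_{s-1}-1}-q_{i_{s-1}-1}$ with $Q=\emptyset$. Property (i) is clear, the column-sum part of (ii) at $j'=i_{s-1}-1$ follows because no earlier iteration touched column $i_{s-1}-1$, and the row-sum part requires that the prior cleanup contribution of Corollary \ref{lemmata:prop4} to row $i_{s-1}-1$ (namely, $\sum_{k=i_{s-1}}^n q_k-\sum_{k=i_{s-1}}^n p_k$) exactly matches $p_{i_{s-1}-1}-z_{i_{s-1}-1}$, which is precisely the identity in Fact \ref{fact:p-q-z}(6).

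For the inductive step ($j\to j-1$ with $j-1\geq i_s$) I would verify the preconditions of Lemma \ref{lemma:pezzettini} applied to $(z_{j-1},q_{j-1},R[j\ldots i_{s-1}-1])$. The bound $z_{j-1}\geq R[\ell]$ follows because every $R[\ell]$ was set to some $z_\ell^{(r)}\leq z_\ell$ (or reset to zero) combined with the monotonicity of $\z$. The sum condition $q_{j-1}\leq z_{j-1}+\sum_{\ell=j}^{i_{s-1}-1} R[\ell]$ is the delicate part: using the row-sum invariant I would write $\sum_{\ell=j}^{i_{s-1}-1} R[\ell]=\sum_{\ell=j}^{i_{s-1}-1} p_\ell-T$, where $T=\sum_{\ell\in[j,i_{s-1}-1]}\sum_{k\geq i_s} m_{\ell,k}$; property (i) restricts the inner sum to $k\geq j$, and splitting $[j,n]=[j,i_{s-1}-1]\cup[i_{s-1},n]$ lets me evaluate the first piece via the column-sum invariant and the second via the prior-iteration identity supplied by Corollary \ref{lemmata:prop4}. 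Substituting Fact \ref{fact:p-q-z}(3) to replace $z_\ell$ by $p_\ell$ on $[i_s,i_{s-1}-2]$ and Fact \ref{fact:p-q-z}(1) for the suffix-sum identity $\sum_{k=j-1}^n z_k=\sum_{k=j-1}^n p_k$, the inequality reduces to $\sum_{k=j-1}^n q_k\leq\sum_{k=j-1}^n p_k$, which is exactly the inversion-point condition for $j-1\in[i_s,i_{s-1}-1]$ with odd $s$.

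Once the preconditions are met, the iteration assigns $m_{\ell,j-1}\leftarrow R[\ell]$ for $\ell\in Q$, $m_{j-1,j-1}\leftarrow z_{j-1}^{(d)}$, and $R[j-1]\leftarrow z_{j-1}^{(r)}$. Property (i) is preserved because every touched entry has both coordinates $\geq j-1$. For (ii) at $j'=j-1$, the column sum is $z_{j-1}^{(d)}+\sum_{\ell\in Q} R[\ell]=q_{j-1}$ by the output identity of Lemma \ref{lemma:pezzettini}, and the row sum equals $z_{j-1}^{(r)}+z_{j-1}^{(d)}=z_{j-1}=p_{j-1}$ by Fact \ref{fact:p-q-z}(3). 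For $j'\in\{j,\ldots,i_{s-1}-1\}$ the column sum is unchanged (no entry in columns $\geq j$ is touched), and the row-sum identity is preserved because the transfer $R[\ell]\to m_{\ell,j-1}$ zeroes $R[\ell]$ while increasing $\sum_{k\geq i_s} m_{\ell,k}$ by the same amount. The main obstacle I anticipate is the bookkeeping in the sum precondition: the interplay between property (i), the column-sum invariant, the boundary data from Corollary \ref{lemmata:prop4}, and the suffix-sum identities of Fact \ref{fact:p-q-z} has to be tracked carefully in order to reduce the precondition to the inversion-point inequality.
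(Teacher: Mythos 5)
Your proposal is correct and follows essentially the same route as the paper: a reverse induction on $j$ within the odd iterations of the outer loop, with the mutual dependence on Corollary \ref{lemmata:prop4} at the previous (even) outer step supplying the starting state, Fact \ref{fact:p-q-z}(6) furnishing the base-case precondition $q_{i_{s-1}-1}\leq z_{i_{s-1}-1}$, and a manipulation of the row-sum and column-sum invariants together with Fact \ref{fact:p-q-z}(1),(3) reducing the sum precondition of Lemma \ref{lemma:pezzettini} to the inversion-point inequality $\sum_{k\geq j}q_k\leq\sum_{k\geq j}p_k$. The only cosmetic difference is the order of the algebra: the paper starts from $q_j\leq z_j+\sum_{k>j}z_k-\sum_{k>j}q_k$ and expands forward, whereas you express $\sum R[\ell]$ via the row-sum invariant and collapse; both land on the same identity $\sum R[\ell]=\sum_{k\geq j}p_k-\sum_{k\geq j}q_k$.
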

\begin{IEEEproof}
For (i) we observe that, before the first iteration ($j = i_{s-1}-1$) the condition holds (by line \ref{ini:1}, when $s=1$, and  
by Corollary \ref{lemmata:prop4} for odd  $s > 1$). 
Then, within each iteration of the {\bf for}-loop values $m_{\ell \, c}$ only change 
in lines \ref{m-change2}, where (i) is clearly preserved, and in line \ref{m-change1} where, as the result of call to Algorithm \ref{algo:Lemma},
we have $\ell \in Q \subseteq \{j+1, \dots, i_{s-1}-1\},$ which again preserves  (i).

We now prove (ii) by induction on the value of $j$.
First we observe that at the beginning of the first iteration of the {\bf for}-loop (lines \ref{int2.1-start}-\ref{int2.1-end}), i.e., 
for $j= i_{s-1}-1,$ it holds that 
\begin{equation} \label{lemma9:eq1}
\sum_{k \geq i_{s-1}} m_{i_{s-1}-1\, k}  =  \sum_{k = i_{s-1}}^n q_k - \sum_{k = i_{s-1}}^n p_k = p_{i_{s-1}-1} - z_{i_{s-1}-1}.
\end{equation}
This is true when $s = 1,$ since in this case we have $i_{s-1} = n+1,$ hence the two sums in the middle term are both 0; the first term is 
0 since no term in $M$ has been fixed yet, and the last term is also 0, since $p_n = z_n$ by assumption.
The equation is also true for each odd $s > 1$ by Corollary \ref{lemmata:prop4}.
Moreover, at the beginning of the first iteration ($j = i_{s-1}-1$) it holds that $R[\ell] = 0$ for each $\ell = 1, \dots, n.$ This is true for $s = 1$ 
because of the initial setting in line 2. For $s > 1$ the property holds since any $R[\ell]$ is only assigned non-zero value within 
the {\bf for}-loop (lines \ref{int2.1-start}-\ref{int2.1-end}) and unless the algorithm stops any non-zero $R[\ell]$ is 
zeroed again immediately after the {\bf for}-loop, 
in lines \ref{inversion-sums1-start}-\ref{inversion-sums1-end} unless the
exit condition $i_s = 1$ is verified which means that $s = k$ and the algorithm terminates immediately after.  

When Algorithm \ref{algo:Algonew} enters the {\bf for}-loop at lines \ref{int2.1-start}-\ref{int2.1-end}, 
$s$ is odd. Then, by point 6.\ of Fact \ref{fact:p-q-z} and (\ref{lemma9:eq1}) it holds that 
\begin{equation} \label{lemma9:eq2}
q_{i_{s-1}-1} \leq z_{i_{s-1}-1} = p_{i_{s-1}-1} - \sum_{k \geq i_{s-1}} m_{i_{s-1}-1\, k}.
\end{equation} 
This implies that for $j = i_{s-1}-1$ the values $z_j, q_j$ together with the values in $R[j+1 \dots i_{s-1}-1]$ 
satisfy the hypotheses of Lemma \ref{lemma:pezzettini}. Hence, 
the call in line 7 to  Algorithm \ref{algo:Lemma} (implementing the construction in the proof of Lemma \ref{lemma:pezzettini}) correctly returns
a splitting of $z_j$ into two parts $z^{(d)}_j$ and $z_j^{(r)}$ and a set of indices $Q \subseteq \{j+1, \dots, i_{s-1}-1\}$ s.t. 
$$q_j = z^{(d)}_j + \sum_{\ell \in Q} R[\ell] = m_{j\, j} + \sum_{k \geq j+1} m_{k\, j} = \sum_{k \geq i_s} m_{k\, j}$$ 
where the first equality holds after the execution of  lines \ref{inner-for1-start}-\ref{inner-for1-end}, 
and the second equality holds because { by (i) $m_{k\, j} = 0$ for 
$k < j$}. 
We have  established the first equation   
of (\ref{prop1:eq}).
Moreover, the second equation of (\ref{prop1:eq}) also holds because by  the equality in (\ref{lemma9:eq2}), the result of the assignment in line \ref{inner-for1-end} 
and (by (i), with $j = i_{s-1}-1$)  $m_{i_{s-1}\, k} = 0$ for $i_s \leq k < i_{s-1}-1$, we get
\begin{eqnarray*}
p_{i_{s-1}-1} &=& 
z_{i_{s-1}-1}^{(r)} + z_{i_{s-1}-1}^{(d)}
+ \sum_{k > i_{s-1}-1} m_{i_{s-1}-1\, k}\\
&=& R[i_{s-1}-1] + \sum_{k \geq i_{s}} m_{i_{s-1}-1\, k},
\end{eqnarray*}
We now argue for the cases 
$j = i_{s-1}-2, i_{s-1}-3,  \dots, i_s.$ By induction we can assume that 
{ at the beginning of any  iteration of the {\bf for}-loop (lines \ref{int2.1-start}-\ref{int2.1-end}) with 
$i_s \leq j < i_{s-1}-1,$ we have that }
for each $i_{s-1}-1 \geq j' > j$ 
\begin{equation} \label{lemma4:indhyp1}
\sum_{k=i_s}^n m_{k\, j'} = q_{j'} \quad
\sum_{k=i_s}^n m_{j' \, k} = p_{j'} - R[j']
\end{equation}
and (if $s > 1$) by Corollary \ref{lemmata:prop4} for each $j' \geq i_{s-1}$ we have
\begin{equation} \label{lemma4:indhyp1-2-bis}
\sum_{k=i_s}^n m_{k\, j'} = q_{j'} \qquad \mbox{and} \qquad \sum_{k=i_s}^n m_{j'\, k} = p_{j'}
\end{equation}  
Moreover, by point 1.\ and 3.\ of Fact \ref{fact:p-q-z} we have  
$$z_j  = p_j  \quad  \mbox{ and } \quad \sum_{k = j}^n z_k  = \sum_{k=j}^n p_k \geq \sum_{k=j}^n q_k.$$
 From these, we have 
\begin{eqnarray}
q_j &\leq&  z_j + \sum_{k=j+1}^n z_k  -  \sum_{k=j+1}^n q_k  \\ \nonumber 
    &=&  z_j + \sum_{k=j+1}^n p_k - \sum_{k=j+1}^n q_k  \\ 
&=&   z_j +   \sum_{k=j+1}^{i_{s-1}-1} \left( \sum_{r=i_s}^n m_{k \, r} + R[k] \right) 
                    + \sum_{k = i_{s-1}}^n \sum_{r=i_s}^n m_{k\, r}  
                                    - \sum_{k=j+1}^n \sum_{r=i_s}^n m_{r\, k}  \label{lemma4:eqnarray3}\\
&=&       z_j  + \sum_{k=j+1}^{i_{s-1}-1} R[k]  \!\! + \! \sum_{k=j+1}^n \sum_{r=i_s}^n m_{k\, r} 
                  - \sum_{k=j+1}^n \sum_{r=i_s}^n m_{r\, k}  \label{lemma4:eqnarray4}\\
&=&   z_j + \! \sum_{k=j+1}^{i_{s-1}-1} R[k] \label{lemma4:eqnarray5}
\end{eqnarray}

\medskip
\noindent
where 
(\ref{lemma4:eqnarray3}) follows 
by using  
(\ref{lemma4:indhyp1}) and (\ref{lemma4:indhyp1-2-bis}); 
(\ref{lemma4:eqnarray4}) follows from (\ref{lemma4:eqnarray3}) by simple algebraic manipulations; {finally  
(\ref{lemma4:eqnarray5}) follows from (\ref{lemma4:eqnarray4}) because, 
by (i), at the end of iteration $j+1$, we have
$m_{\ell \, c} = 0$ if $\ell < j+1$ or $c < j+1$; hence 
$$\sum_{k=j+1}^n \sum_{r=i_s}^n m_{k\, r}  = \sum_{k=j+1}^n \sum_{r=j+1}^n m_{k\, r}$$ 
and 
 $$\sum_{k=j+1}^n \sum_{r=i_s}^n m_{r\, k}  = \sum_{k=j+1}^n \sum_{r=j+1}^n m_{r\, k}$$ and the equal terms and cancel out. }
 
 \smallskip
For each $k = j+1, \dots, i_{s-1}-1$ such that $R[k] \neq 0$ we have 
$R[k] = z^{(r)}_k \leq z_k \leq z_j$, where we are using the fact that for $\z = \p \wedge \q$ it holds that $z_1 \geq \cdots \geq z_n.$
 
Therefore $z_j, q_j$ and the values in $R[j+1\dots, i_{s-1}-1]$ satisfy the
 hypotheses of Lemma \ref{lemma:pezzettini}. Hence, 
the call in line 7 to  Algorithm \ref{algo:Lemma} (implementing the construction of Lemma \ref{lemma:pezzettini}) correctly returns
a splitting of $z_j$ into two parts $z^{(d)}_j$ and $z_j^{(r)}$ and a set of indices $Q \subseteq \{j+1, \dots, i_{s-1}-1\}$ s.t. 
$q_j = z^{(d)}_j + \sum_{\ell \in Q} R[\ell].$ Then we can use the same argument used in the first part of this proof 
(for the base case $j = i_{s-1}-1$) to show that  the first equation 
of (\ref{prop1:eq}) holds 
after lines \ref{inner-for1-start}-\ref{inner-for1-end}.

For $j' = j$, the second equation in (\ref{prop1:eq}) is guaranteed by the assignment in line \ref{inner-for1-end}.
Moreover, for $j' > j$ and $j' \not \in Q$ it holds 
since no entry $m_{j'\, k}$ or $R[j']$ is modified in lines  
\ref{inner-for1-start}-\ref{inner-for1-end}. Finally, for $j' > j$ and $j'  \in Q$ 
before the execution of lines \ref{inner-for1-start}-\ref{inner-for1-end} we had
$p_{j'} = R[j'] + \sum_{k \geq i_s} m_{j' \, k}$ with $R[j] > 0$ and $m_{j'\, j} = 0$
and after the execution of lines \ref{inner-for1-start}-\ref{inner-for1-end}
the values of $R[j']$ and  $m_{j' \, j}$ are swapped, hence the equality still holds. 
The proof of the second equality of (\ref{prop1:eq}) is now complete.
\end{IEEEproof}

\begin{corollary} \label{lemmata:prop2}
When algorithm \ref{algo:Algonew}  reaches line \ref{int2.1-if}, it holds that 
\begin{equation} \label{prop2:eq1-q}
\sum_{k \geq i_s} m_{k \, j} = q_j  \mbox{ and } \sum_{k < i_s} m_{k \, j} = 0 
\quad \mbox{ for } j \geq i_s 
\end{equation}
\begin{equation} \label{prop2:eq1-p}
\sum_{k \geq i_s - 1} m_{j \, k} = p_j  \mbox{ and } \sum_{k < i_s - 1} m_{j \, k} = 0
\quad \mbox{ for } j \geq  i_s 
\end{equation}
and (if $i_s \neq 1$, i.e., this is not the last iteration of the outermost {\bf for}-loop)
\begin{equation} \label{prop2:eq2}
\sum_{k} m_{k \, i_{s}-1} =  \sum_{k=i_s}^n p_k  - \sum_{k=i_s}^n q_k  = q_{i_s -1} - z_{i_s-1}.
\end{equation}
\end{corollary}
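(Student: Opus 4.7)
The plan is to combine Lemma \ref{lemmata:prop1} (applied at the terminal index $j=i_s$ of the inner {\bf for}-loop of lines \ref{int2.1-start}-\ref{int2.1-end}) with an outer induction on $s$; for $s>1$ the inductive hypothesis is Corollary \ref{lemmata:prop4} applied at step $s-1$ (which is even), and for $s=1$ every statement about $j\geq i_0=n+1$ is vacuous while the initialization in lines \ref{ini:1}-\ref{ini:2} provides the required zero matrix.

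Instantiating Lemma \ref{lemmata:prop1} at $j=i_s$ yields (i) $m_{\ell,c}=0$ whenever $\min\{\ell,c\}<i_s$, and (ii) $\sum_{k\geq i_s} m_{k,j'} = q_{j'}$ together with $R[j'] + \sum_{k\geq i_s} m_{j',k} = p_{j'}$ for each $j'\in\{i_s,\ldots,i_{s-1}-1\}$. Equation (\ref{prop2:eq1-q}) for $j\in\{i_s,\ldots,i_{s-1}-1\}$ is then immediate from (i) and (ii), once one observes that lines \ref{inversion-sums1-start}-\ref{inversion-sums1-end} modify only column $i_s-1$; for $j\geq i_{s-1}$ the identity is supplied by the inductive hypothesis, noting that the current odd step does not touch columns indexed $\geq i_{s-1}$. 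For (\ref{prop2:eq1-p}) with $j\in\{i_s,\ldots,i_{s-1}-1\}$, the assignment $m_{j,i_s-1}\leftarrow R[j]$ performed in the if-block, combined with (ii), gives $\sum_{k\geq i_s-1} m_{j,k} = R[j] + \sum_{k\geq i_s} m_{j,k} = p_j$; (i) supplies the tail-zero claim, and the case $j\geq i_{s-1}$ follows from the inductive hypothesis since the current iteration writes only into column $i_s-1 < i_{s-1}-1$ and only at rows $<i_{s-1}$.

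For (\ref{prop2:eq2}), I would sum the row identity of (ii) over $\ell\in\{i_s,\ldots,i_{s-1}-1\}$. Column $i_s-1$ has been written into only by the current if-block (earlier iterations $s'<s$ wrote in columns $\geq i_{s'}-1 > i_s-1$), so
\[
\sum_k m_{k,i_s-1} \;=\; \sum_{\ell=i_s}^{i_{s-1}-1} R[\ell] \;=\; \sum_{\ell=i_s}^{i_{s-1}-1} p_\ell \;-\; \sum_{\ell=i_s}^{i_{s-1}-1}\sum_{k\geq i_s} m_{\ell,k}.
\]
Extending the double sum over all $\ell\geq i_s$ via $\sum_{k\geq i_s} m_{\ell,k}=p_\ell$ for $\ell\geq i_{s-1}$ (inductive hypothesis, since those rows have no entries in columns $<i_{s-1}-1$) and then swapping the order of summation using the column identity $\sum_{\ell\geq i_s} m_{\ell,k}=q_k$ (from (ii) for $k\in\{i_s,\ldots,i_{s-1}-1\}$ and from the inductive hypothesis for $k\geq i_{s-1}$), the expression telescopes to $\sum_{k\geq i_s} p_k - \sum_{k\geq i_s} q_k$. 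Equation (\ref{eq:z-p-q-inv-odd1}) of Fact \ref{fact:p-q-z}, valid because $s$ is odd, then identifies this difference with $q_{i_s-1}-z_{i_s-1}$.

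The main obstacle I expect is the careful interface with the inductive hypothesis: verifying that the write into column $i_s-1$ in the current odd step does not disturb any entry frozen at step $s-1$, and that the double-sum rearrangement in the previous paragraph truly telescopes. Both points rest on the nesting property of inversion points in Definition \ref{defi:inversions}—outer iteration $s$ touches only indices in $\{i_s-1,\ldots,i_{s-1}-1\}$—which should be verified explicitly before the algebraic collapse is invoked.
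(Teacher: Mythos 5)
Your overall strategy matches the paper's: mutual induction with Corollary \ref{lemmata:prop4} as the hypothesis at step $s-1$, Lemma \ref{lemmata:prop1} instantiated at the terminal index $j=i_s$, and for equation (\ref{prop2:eq2}) the same double-sum rearrangement (you run it from the left-hand side, the paper runs it from the right, but the cancellations are identical).

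However, there is a genuine gap in your treatment of (\ref{prop2:eq1-p}). Your derivation
\[
\sum_{k\geq i_s-1} m_{j,k} = R[j] + \sum_{k\geq i_s} m_{j,k} = p_j
\]
rests on ``the assignment $m_{j,i_s-1}\leftarrow R[j]$ performed in the if-block.'' But that if-block (lines \ref{if1}--\ref{int2.1-if}) is guarded by the condition $i_s\neq 1$; when $i_s=1$, i.e., $s=k$ and the outer loop is on its final iteration, the block is skipped and the residues $R[j]$ are never written into $\mathbf{M}$. Since column $i_s-1=0$ does not exist, $\sum_{k\geq i_s-1} m_{j,k}$ collapses to $\sum_{k\geq 1} m_{j,k}$, which by Lemma \ref{lemmata:prop1}(ii) equals $p_j - R[j]$; the identity you need therefore requires showing $R[j]=0$ for every $j$. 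The paper closes this case with a separate mass-conservation argument: summing the already-proved column identity over all columns gives $1=\sum_j q_j = \sum_{j,k} m_{k,j}$, summing the row identity gives $\sum_{j,k} m_{j,k} = \sum_j p_j - \sum_j R[j] = 1 - \sum_j R[j]$, and equating these with $R[j]\geq 0$ forces every $R[j]$ to vanish. You must supply this (or an equivalent) argument for the $i_s=1$ case before (\ref{prop2:eq1-p}) is established in full.

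One smaller point worth stating explicitly: after the if-block writes into column $i_s-1$, the invariant (i) of Lemma \ref{lemmata:prop1} no longer holds verbatim (entries with $\min\{\ell,c\}=i_s-1<i_s$ are now nonzero). The tail-zero claims in (\ref{prop2:eq1-q}) and (\ref{prop2:eq1-p}) survive because the write targets only column $i_s-1$, not columns $<i_s-1$ nor columns $\geq i_s$, but this should be said rather than implicitly carried over from the pre-if-block state.
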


\begin{IEEEproof}
We will only prove (\ref{prop2:eq1-q}) and (\ref{prop2:eq1-p}) for $j = i_s, \dots, i_{s-1}-1.$ In fact, this is 
all we have to show if $s=1$. Furthermore, if $s >1,$ then in the previous iteration of the outermost {\bf for}-loop
the algorithm has reached line \ref{int2.2-if} and by 
Corollary \ref{lemmata:prop4} (\ref{prop2:eq1-q}) and (\ref{prop2:eq1-p}) also hold for 
each $j \geq i_{s-1},$ as desired.

By Lemma \ref{lemmata:prop1} when the algorithm reaches line \ref{if1} we have that
for each $j = i_s, \dots, i_{s-1}-1$ 
\begin{equation} \label{coro:recap} 
\sum_{k \geq i_s} m_{k\, j} = q_j \mbox{ and } \sum_{k \geq i_s}m_{j \, k} = p_j - R[j]
\end{equation}
Note that the entries of $M$  in columns $i_{s}, \dots, i_{s-1}-1$ will not be changed again and within the {\bf for}-loop 
in lines  \ref{int2.1-start}-\ref{int2.1-end} only the values $m_{i \, j}$ with $j = i_{s}, \dots i_{s-1}-1$ and $i =j, \dots, i_{s-1}-1$ 
may have been changed. Hence,  for each $j= i_s, \dots, i_{s-1}-1,$ it holds that 
$\sum_{k < i_s} m_{k \, j} = 0$ and $\sum_{k < i_s - 1} m_{j \, k} = 0$ as desired.

{ Moreover, by Lemma \ref{lemmata:prop1} (i) with $j = i_s$ it holds that $m_{\ell \, c} =  0,$ when  
$\ell < i_s$ or $c < i_s.$
%
%
Hence,  for each $j= i_s, \dots, i_{s-1}-1,$ it holds that 
$\sum_{k < i_s} m_{k \, j} = 0$ and $\sum_{k < i_s - 1} m_{j \, k} = 0$ as desired.}

Since the operations in lines \ref{if1}-\ref{int2.1-if} only change values in column $i_{s}-1$ of $M$ and in the vector $R$, 
the first equation in (\ref{coro:recap}) directly implies that (\ref{prop2:eq1-q}) holds for each $j = i_{s}, \dots, i_{s-1}.$

With the aim of proving (\ref{prop2:eq1-p}) let us first observe that if $i_s = 1$ (hence $s = k$ and the
algorithm is performing the last iteration 
of the outermost {\bf for}-loop) then by (\ref{prop2:eq1-q}) and (\ref{coro:recap}) we have
\begin{eqnarray} 
1 &=& \sum_{j=1}^n q_j \\
   &=& \sum_{j=1}^n \sum_{k=1}^n m_{k\, j} \\
 &=& \sum_{j=1}^n \sum_{k=1}^n m_{j\, k} \\
 &=&  \sum_{j=1}^n p_j - \sum_{j=1}^{i_{s-1}-1} R[j] \\
&=& 1 - \sum_{j=1}^{i_{s-1}-1} R[j]
\end{eqnarray}
and, since $R[j] \geq 0,$ it follows that $R[k] = 0,$ for each $j = i_s, \dots, i_{s-1}-1.$

Now, first assume that $i_s > 1$ hence $s < k$
From (\ref{coro:recap}) for each $j=i_s, \dots, i_{s-1}-1$ such that $R[j] = 0$ we immediately have that 
(\ref{prop2:eq1-p}) is also satisfied. Hence, this is the case for all $j=i_s, \dots, i_{s-1}-1$ 
when $s=k$ and $i_s=1.$
Moreover,  if there is some $j \in \{i_s, \dots, i_{s-1}-1\}$ 
such that $R[j] \neq 0$ (when $s < k$ and $i_s > 1,$), 
after the execution of line \ref{inversion-sums1-end},  
for each $j=i_s, \dots, i_{s-1}-1$ such that $R[j]$ was  $\neq 0$ 
we have $m_{j \, i_s-1} = R[j]$, hence,
$$\sum_{k \geq i_s -1} m_{j\, k} = m_{j\, i_s-1} + \sum_{k \geq j} m_{j\, k} = R[j] + p_j - R[j]$$
completing the proof of (\ref{prop2:eq1-p}). 

Finally, we prove (\ref{prop2:eq2}). By the assignments in line \ref{inversion-sums1-end} and the fact that this is the first time that values in 
column $i_{s}-1$ of $M$ are set to non-zero values, from point 5 of Fact \ref{fact:p-q-z} we get 
\begin{eqnarray*}
q_{i_s-1} - z_{i_s-1} &=& \sum_{k=i_s}^n p_k - \sum_{k=i_s}^n q_k \\
&=& \sum_{k=i_s}^{i_{s-1}-1} p_k + \sum_{k=i_{s-1}}^n p_k - \sum_{k=i_s}^n q_k \\
&=& \sum_{k=i_s}^{i_{s-1}-1} \sum_{\ell \geq i_s - 1} m_{k \, \ell}  + \sum_{k=i_{s-1}}^n \sum_{\ell \geq i_s} m_{k\, \ell}
                                - \sum_{k=i_s}^n \sum_{\ell \geq i_s} m_{\ell \, k}\\
&=& \sum_{k=i_s}^{i_{s-1}-1} m_{k\, i_{s}-1} + \sum_{k=i_s}^n \sum_{\ell \geq i_s} m_{k\, \ell} 
                  - \sum_{k=i_s}^n \sum_{\ell \geq i_s} m_{\ell \, k}\\
&=& \sum_{k=i_s}^{i_{s-1}-1} m_{k\, i_{s}-1}
\end{eqnarray*}
 that, together with the fact that at this point the only non-zero values in column $i_{s}-1$ of $M$ are in the rows
 $i_{s}, \dots, i_{s-1}-1,$ completes the proof of (\ref{prop2:eq2}).
\end{IEEEproof}

\begin{lemma} \label{lemmata:prop3}
{At the end of each iteration of the {\bf for}-loop of lines lines \ref{int2.2-start}-\ref{int2.2-end} in Algorithm  \ref{algo:Algonew} 
($s = 1, \dots, k,$ and $i_s\leq j <i_{s-1}$)}
we have {(i) $m_{\ell\, c} = 0$ for each $\ell, c$ such that $\min\{\ell, c\} < j$; 
and (ii)}
 for each $j' = j, \dots, i_{s-1}-1$,
$$ 
\sum_{k} m_{j' \, k} = p_{j'}  \qquad \mbox{ and } \qquad C[j'] + \sum_{k} m_{k\, j'} = q_{j'}.
$$ 
\end{lemma}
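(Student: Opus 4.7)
My plan is to mirror the proof of Lemma~\ref{lemmata:prop1}, swapping everywhere rows with columns, the distribution $\p$ with $\q$, and the auxiliary array $R$ with $C$; whenever the odd-case argument invokes item 1, 3 or 6 of Fact~\ref{fact:p-q-z}, its even-case analogue invokes items 2, 4 and 5 respectively. The argument is by descending induction on $j$ within a fixed even iteration $s$ of the outer {\bf for}-loop of Algorithm~\ref{algo:Algonew}.

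For part~(i), since $s$ is even we always have $s\geq 2$, so the previous outer iteration was odd and Corollary~\ref{lemmata:prop2} applied at step $s-1$ already guarantees $m_{\ell\,c}=0$ whenever $\min\{\ell,c\}<i_{s-1}-1$; this is precisely the base case $j=i_{s-1}-1$ of~(i). Within the loop, $m$-entries change only in lines~\ref{m-change4} and~\ref{m-change5}, and in both cases both coordinates are $\geq j$, so (i) is preserved.

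For part~(ii) at the base case $j=i_{s-1}-1$, item~5 of Fact~\ref{fact:p-q-z} at the odd index $s-1$ yields
$$z_{i_{s-1}-1}=q_{i_{s-1}-1}-\Bigl(\sum_{k=i_{s-1}}^{n}p_k-\sum_{k=i_{s-1}}^{n}q_k\Bigr)\geq p_{i_{s-1}-1},$$
and Corollary~\ref{lemmata:prop2} at step $s-1$ gives $\sum_{k}m_{k,\,i_{s-1}-1}=q_{i_{s-1}-1}-z_{i_{s-1}-1}$. Since all entries of $C$ are $0$ at this point and $p_j\leq z_j$, the triple $(z_j,p_j,C[j+1,\ldots,i_{s-1}-1])$ satisfies the hypotheses of Lemma~\ref{lemma:pezzettini}: Algorithm~\ref{algo:Lemma} returns $z_j^{(d)},z_j^{(r)},Q$ with $p_j=z_j^{(d)}+\sum_{\ell\in Q}C[\ell]$, and lines~\ref{m-change4}--\ref{m-change5} then establish both equalities of~(ii) at $j'=j$. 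For the inductive step $i_s\leq j\leq i_{s-1}-2$, items~2 and~4 of Fact~\ref{fact:p-q-z} yield $z_j=q_j$ and $\sum_{k=j}^n z_k=\sum_{k=j}^n q_k\geq \sum_{k=j}^n p_k$; substituting the inductive hypothesis on the row sums of $M$ and on the entries of $C$ into a calculation identical in form to~(\ref{lemma4:eqnarray3})--(\ref{lemma4:eqnarray5}) delivers
$$p_j\leq z_j+\sum_{k=j+1}^{i_{s-1}-1}C[k],$$
again meeting the hypotheses of Lemma~\ref{lemma:pezzettini}, so Algorithm~\ref{algo:Lemma} correctly fills row~$j$. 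The second equality of~(ii) for $j'>j$ is preserved just as in the odd case: if $j'\notin Q$ nothing changes; if $j'\in Q$, lines~\ref{m-change4}--\ref{m-change5} swap $C[j']$ with $m_{j,j'}$, preserving $C[j']+\sum_k m_{k,j'}=q_{j'}$.

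Since the proof is a symbol-for-symbol mirror of Lemma~\ref{lemmata:prop1}, no new conceptual idea is needed; the main obstacle is purely notational, namely keeping the many suffix-sum identities and index bounds aligned once the roles of $\p/\q$, $R/C$ and rows/columns are interchanged throughout.
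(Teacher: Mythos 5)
Your proposal is correct and takes essentially the same approach as the paper: the paper's own proof of Lemma~\ref{lemmata:prop3} is the one-line remark that it is obtained from Lemma~\ref{lemmata:prop1} by swapping rows with columns, $\p$ with $\q$, and $R$ with $C$, which is exactly the mirror argument you carry out in detail (including the correct substitution of items~2, 4, 5 of Fact~\ref{fact:p-q-z} for items~1, 3, 6, and the use of Corollary~\ref{lemmata:prop2} at the preceding odd step $s-1$ to seed both the base case of~(i) and the inequality $p_{i_{s-1}-1}\le z_{i_{s-1}-1}$ needed for the first call to Lemma~\ref{lemma:pezzettini}).
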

\begin{IEEEproof}
The proof 
can be 
easily obtained by proceeding like in Lemma \ref{lemmata:prop3} (swapping the roles of 
rows and columns of $\mathbf{M}$ and $\p$ and $\q$).
\end{IEEEproof}

\begin{corollary} \label{lemmata:prop4}
When algorithm \ref{algo:Algonew}  reaches line \ref{int2.2-if}, it holds that 
$$ 
\sum_{k \geq i_s} m_{j \, k} = p_j  \mbox{ and } \sum_{k < i_s} m_{j\, k} =  0 
\quad \mbox{ for } j \geq i_s 
$$ 
$$ 
\sum_{k \geq i_s-1} m_{k \, j} = q_j  \mbox{ and } \sum_{k < i_s-1} m_{k\, j} =  0
\quad \mbox{ for } j \geq  i_s 
$$ 
and {if $i_s \neq 1$ (the outermost {\bf for}-loop is not in last iteration)}
$$ 
\sum_{k} m_{i_{s}-1 \, k} =  \sum_{k=i_s}^n q_k  - \sum_{k=i_s}^n p_k  = p_{i_s -1} - z_{i_s-1}.
$$ 
\end{corollary}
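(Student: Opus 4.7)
The plan is to mirror the proof of Corollary \ref{lemmata:prop2}, exploiting the fact that the even branch of the outermost \textbf{for}-loop (lines \ref{int2.2-start}--\ref{int2.2-if}) is the transpose of the odd branch: rows and columns swap roles, $\p$ and $\q$ swap roles, and $C[\cdot]$ replaces $R[\cdot]$. Consequently Lemma \ref{lemmata:prop3} plays exactly the role that Lemma \ref{lemmata:prop1} played in the proof of Corollary \ref{lemmata:prop2}, and the even-index parts 2, 4, 6 of Fact \ref{fact:p-q-z} replace the odd-index parts 1, 3, 5.

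First, I would reduce to showing the three claims for indices $j \in \{i_s,\dots,i_{s-1}-1\}$. Indeed, if $s=1$ there are no other indices; and if $s>1$, then at the end of the previous (odd) iteration Corollary \ref{lemmata:prop2} already gives the required row/column sum conditions for all $j \geq i_{s-1}$, and since the even iteration only modifies entries $m_{\ell\,c}$ with $\ell,c \geq i_s$ and $\min\{\ell,c\}\geq i_s-1$ (by Lemma \ref{lemmata:prop3}(i) applied with $j=i_s$, these are the only entries written), those earlier sums are preserved.

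Next I would apply Lemma \ref{lemmata:prop3}(ii) with $j=i_s$: for each $j'\in\{i_s,\dots,i_{s-1}-1\}$ we have $\sum_k m_{j'\,k} = p_{j'}$ and $C[j'] + \sum_k m_{k\,j'} = q_{j'}$, and by Lemma \ref{lemmata:prop3}(i) all these sums are actually over $k\geq i_s$. This immediately proves the first row-sum equation $\sum_{k\geq i_s} m_{j\,k}=p_j$ and the ``$\sum_{k<i_s}m_{j\,k}=0$'' and ``$\sum_{k<i_s-1}m_{k\,j}=0$'' parts. Then I would split into the cases $i_s=1$ (last iteration, so $s=k$) and $i_s>1$. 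In the first case, summing $p_j = \sum_k m_{j\,k}$ over $j$ and comparing with $\sum_j q_j = 1 - \sum_j C[j]$ forces every $C[j]=0$, so $\sum_{k\geq i_s-1} m_{k\,j} = \sum_k m_{k\,j} = q_j$ holds. In the second case, the \textbf{for}-loop in lines \ref{inversion-sums2-start}--\ref{inversion-sums2-end} writes exactly the remaining $C[\ell]$ into column position $m_{i_s-1\,\ell}$, so after it $\sum_{k\geq i_s-1} m_{k\,j} = C[j] + \sum_{k\geq i_s} m_{k\,j} = q_j$, giving the second claim.

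Finally, for the third equation, observe that entries in row $i_s-1$ of $\mathbf{M}$ were all zero before line \ref{inversion-sums2-end} (by Lemma \ref{lemmata:prop3}(i)), and after it are exactly the nonzero residuals $C[\ell]$ for $\ell\in\{i_s,\dots,i_{s-1}-1\}$. Computing $\sum_k m_{i_s-1\,k}$ by a telescoping calculation analogous to the one at the end of Corollary \ref{lemmata:prop2}'s proof, but using points 2, 4, 6 of Fact \ref{fact:p-q-z} in place of 1, 3, 5, yields $\sum_k m_{i_s-1\,k} = \sum_{k=i_s}^n q_k - \sum_{k=i_s}^n p_k$, which equals $p_{i_s-1}-z_{i_s-1}$ by point 6 of Fact \ref{fact:p-q-z}. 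The only real subtlety—and the main place where care is needed—is keeping straight the direction of the ``$\leq$'' versus ``$\geq$'' inequalities on suffix sums at even inversion blocks (so that the hypotheses of Lemma \ref{lemma:pezzettini} really are satisfied in the even branch), but this is precisely what parts 2, 4, 6 of Fact \ref{fact:p-q-z} were set up to provide.
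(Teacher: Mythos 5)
Your proposal is correct and takes essentially the same route as the paper: the paper's entire proof of Corollary \ref{lemmata:prop4} is the single sentence that it follows by proceeding as in Corollary \ref{lemmata:prop2} with the roles of rows/columns and $\p$/$\q$ swapped. You have simply unpacked what that symmetry amounts to, substituting Lemma \ref{lemmata:prop3} for Lemma \ref{lemmata:prop1} and the even-indexed parts of Fact \ref{fact:p-q-z} for the odd-indexed ones, which is exactly the intended argument.
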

\begin{IEEEproof}
The proof 
can be 
easily obtained by proceeding like in Corollary \ref{lemmata:prop2} (swapping the roles of 
rows and columns of $\mathbf{M}$ and of $\p$ and $\q$).
\end{IEEEproof}

\subsection{{How  \textbf{Algorithm 1} works: An informal   description of its functioning}}\label{informal}
Given  the inversion points $i_0, i_1, \dots, i_k$ of the two probability distributions $\p$ and  $\q$, as defined in Definition \ref{defi:inversions},
 for each $s=1, \dots, k$  let us call   the list of integers $L^s = \{i_{s-1}-1, i_{s-1}-2 , \dots, i_{s}\}$ 
(listed in decreasing order)  a $\p$-\emph{segment},  or a $\q$-\emph{segment}, according to whether $s$ is odd or even. 
For each $i$ belonging to a $\p$-segment we have 
$$\sum_{j=i}^n z_j = \sum_{j = i}^n p_j \geq \sum_{j = i}^n q_j.$$  
For each $i$ belonging to a $\q$-segment we have 
$$\sum_{j=i}^n z_j = \sum_{j = i}^n q_j \geq \sum_{j = i}^n p_j.$$

\smallskip
\textbf{Algorithm 1}  proceeds by filling entries of the matrix $M$ with non-zero values.
 Other possible actions of the 
algorithm consist in moving probabilities from one entry of $M$ to a neighboring one. 
The reasons of this  moving will become clear as the description of the algorithm unfolds.

At any point during the execution of the algorithm, 
we say that a column $i$ is satisfied if the 
sum of the entries on column $i$ is equal  to $q_i$. Analogously, we say that a row $i$ is satisfied if the 
sum of the entries on row $i$ is equal  to $p_i$.
Obviously, the goal of the algorithm is to satisfy all rows and columns.
Line \ref{ini:3} makes sure that the first value of $j$ in line \ref{int2.1-start} is in a $\p$-segment.

Line \ref{ini:3} makes sure that the first value of $j$ in line \ref{int2.1-start} is in a $\p$-segment.
For each $j = n, \dots, 1,$ with $j$ in a $\p$-segment, the algorithm maintains the following invariants:

\begin{itemize}
\item[1-p] all rows $j' > j$ are satisfied
\item[2-p] all columns $j' > j$ are satisfied
\item[3-p] the non-zero entries $M[j', j]$ for $j' > j$ in the same $\p$-segment, satisfy $M[j',j]+M[j',j'] = z_{j'}$
\end{itemize}

The main steps of the algorithm when $j$ is in a $\p$-segment amount  to: 
\begin{itemize}
\item[]
\begin{itemize}
\item[Step1-p:] Put $z_j$ in $M[j, j]$. By the assumption that $j$ is part of a $\p$-segment, we have that this assignment satisfies also row $j$. 
However, this assignment  might create an 
excess on column $j$, (i.e., the sum of the elements on column $j$ could be greater than the  value of
$q_j$) since by the invariants 1-p and 2-p and the assigned value to the  entry $M[j,j]$ we have that the sum of all the entries filled so far equals
to  
$$\sum_{j' \geq j} z_j' = \sum_{j' \geq j} p_j' \geq \sum_{j' \geq j} q_j',$$ and the entries on the columns $j' > j$ satisfy exactly $q_{j'}$, that is, 
sums up to $q_{j'}$.
\item[Step2-p:] If there is an excess on column $j$, we  adjust it by applying Lemma 3. Indeed, by Lemma 3 
we can select entries $M[j',j]$ that together with part of $M[j,j] = z_j$ sum up exactly to 
$q_j.$ The remaining part of $z_j = M[j,j]$ and each of the non-selected entries on column $j$ are kept on their same row but 
moved to column $j-1$. In the pseudocode of {Algorithm 1}, this operation is simulated by using the auxiliary array $R[\cdot]$.
Notice that by this operation we are maintaining invariants $1$-p, $2$-p, $3$-p for $j \leftarrow j+1.$
\end{itemize}
\end{itemize} 
Step1-p and Step2-p are repeated as long as $j$ is part of a $\p$-segment. 

When $j$ becomes part of a $\q$-segment, the roles pf $\p$ and $\q$ are inverted, namely, we have that the following invariants hold:

\begin{itemize}
\item[1-q] all rows $j' > j$ are satisfied
\item[2-q] all columns $j' > j$ are satisfied
\item[3-q] the non-zero entries $M[j, j']$ for $j' > j$ in the same $\q$-segment, satisfy $M[j,j']+M[j',j'] = z_{j'}$
\end{itemize}

From now on and as long as $j$ is part of a $\q$-segment the main steps of the algorithm amount  to: 
\begin{itemize}
\item[]
\begin{itemize}
\item[Step1-q:] Put $z_j$ in $M[j, j]$. By the assumption that $j$ is part of a $\q$-segment, we have that this assignment satisfies also column $j$. 
Again, this assignment   might create an 
excess on row $j$, since by the invariants 1-q and 2-q and the entry $M[j,j]$ we have that the sum of all the entries filled so far is equal to  
$$\sum_{j' \geq j} z_j' = \sum_{j' \geq j} q_j' \geq \sum_{j' \geq j} p_j'$$ 
and the entries on the rows $j' > j$ satisfy exactly $p_{j'}.$
\item[Step2-q:] If there is an excess on row $j$, by Lemma 3 we can select entries $M[j,j']$ that together with part of $M[j,j] = z_j$ sum up exactly to 
$p_j.$ The remaining part of $z_j = M[j,j]$ and the non-selected entries  are kept on the same column but are moved up to 
row $j-1$. In the pseudocode of {Algorithm 1}, this operation is simulated by using the auxiliary array $C[\cdot]$.
Notice that by this operation we are maintaining invariants $1$-q, $2$-q, and $3$-q for $j \leftarrow j+1.$
\end{itemize}
\end{itemize}
 
Again these steps are repeated as long as $j$ is part of a $\q$-segment. When $j$ becomes part of a $\p$-segment again, 
 we will have that once more invariants $1$-p, $2$-p, and $3$-p are satisfied. Then,  
 the algorithm resorts to repeat steps Step1-p and Step2-p as long as $j$ is part of a $\p$-segment, 
and so on and so forth switching between  $\p$-segments and $\q$-segments, until all the rows and columns are satisfied.
 
From the above description it is easy to see that all the values used to fill in entries 
of the matrix are created by splitting into two parts some 
element $z_j$. 
This key property of the algorithm  implies the bound on the entropy of $M$ being at most $H(\z) + 1.$

We note here that (for efficiency reasons) in the pseudocode of Algorithm \ref{algo:Algonew} instead of moving values from one column to the next one (Step2-p) or from 
one row to the next one (Step2-q), the arrays $R$ and $C$ are used,
where $R[j']$ plays the role of $M[j',j]$, in invariant $3$-p above,  and 
$C[j']$ plays the role of $M[j,j']$, in invariant $3$-q above.


\section{Extending the result to more distributions}\label{sec:more}

In this section we will show how the algorithm {\sc  Min-Entropy-Joint-Distribution} can be used to attain a $\log k$ additive approximation 
for the problem of minimising the entropy of the joint distribution of $k$ input distributions, for any $k \geq 2.$

In what follows, for the ease of the description, we shall assume that $k= 2^{\kappa}$ for some integer $\kappa \geq 1,$ 
i.e., $k$  is a power of $2$.  A pictorial perspective on the algorithm's behaviour is to imagine that the input distributions are in the 
leaves of a complete binary tree with $k = 2^{\kappa}$ leaves. Each internal node $\nu$ of the tree contains the joint distribution of the 
distributions in the leaves of the subtree rooted at $\nu$. Such a distribution is computed by applying 
the algorithm {\sc  Min-Entropy-Joint-Distribution} to the distributions in the children of $\nu$. 

The algorithm builds such a tree starting from the leaves. Thus, 
the  joint distribution of all the input distributions will be given by the distribution computed at the root of the tree. 

We will denote by $m^{(i-j)}$ the non-zero components of the distribution that our algorithm builds as joint distribution of 
$\p^{(i)}, \p^{(i+1)}, \dots, \p^{(j)}.$ Algorithm \ref{algo:multidistribution} shows the pseudocode for our procedure.

The vector $Indices^{(i-j)}$ is used to record for each component $\m^{(i-j)}[w]$ the indices of the component of the joint probability distribution of $\p^{(i)}, \dots, \p^{(j)}$ which coincides with  $\m^{(i-j)}[w]$. Therefore, if after the execution of line 17, 
for $w = 1, \dots, |\m^{(i-j)}|$, we have $Indices^{(i-j)}[w] = \langle s_i[w], s_{i+1}[w], \dots, s_j[w]  \rangle$ it means that
setting  $M^{(i-j)}[s_i[w], s_{i+1}[w], \dots, s_j[w]] \leftarrow \m^{(i-j)}[w]$ and setting the remaining components of $M^{(i-j)}$ to zero, 
the array $M^{(i-j)}$ is a joint distribution matrix for $\p^{(i)}, \dots, p^{(j)}$ whose non-zero components are equal to the components
of $\m^{(i-j)}.$ Hence, in particular, we have that $H(M^{(i-j)}) = H(\m^{(i-j)}).$

The algorithm explicitly uses this correspondence only for the final array $M^{(1-k)}$ representing the joint distribution of all input distributions. 

Based on the above discussion the correctness of the algorithm can be easily verified. In the rest of this section we will prove that the entropy of the 
joint distribution output by the algorithm guarantees additive $\log k$ approximation.  

We will prepare some definitions and lemmas which will be key tools for proving the approximation guarantee of our algorithm. The proof of these technical lemmas is deferred to the next section.

\begin{algorithm}[ht!] 
\small
{\sc  K-Min-Entropy-Joint-Distribution}($\p^{(1)}, \p^{(2)} , \dots, \p^{(k)}$)\\ 
{\bf Input:} prob.\ distributions $\p^{(1)}, \p^{(2)} , \dots, \p^{(k)},$ with $k = 2^{\kappa}$\\
{\bf Output:} A $k$-dimensional array $\mathbf{M} = [m_{i_1, i_2, \dots, i_k}]$ s.t.\ $\sum_{i_1, \dots, i_{j-1}, i_{j+1}, \dots, i_k} 
m_{i_1, \dots, i_{j-1}, t,  i_{j+1}, \dots, i_k} = p^{(j)}_t$ for each $j = 1, \dots, k$ and each $t$.
\begin{algorithmic}[1]
\FOR{$i = 1$ {\bf to} $k$}
\FOR{$j = 1$ {\bf to} $n$}
\STATE{{\bf set} $\m^{(i-i)}[j] = \p^{(i)}_j$ and $Indices^{(i-i)}[j] = \langle j \rangle$} \label{ini:1}
\COMMENT{$Indices^{(i-i)}[j]$ is a vector of indices} 
\ENDFOR
\ENDFOR
\STATE{{\bf for} $i=1,\dots, k$  {\bf permute the components of} $\m^{(i-i)}$ and $Indices^{(i-i)}$ using the permutation that sorts $\m^{(i-i)}$ in non-increasing order} \label{ini:2}
\FOR{$\ell = 1$ {\bf to} $\kappa$} \label{main:1}
\STATE{$i \leftarrow 1, \, j \leftarrow 2^{\ell}$}
\WHILE{$j \leq k$}  \label{int2.1-start}
\STATE{$j_1 \leftarrow i+2^{\ell-1}-1, \, j_2 = j_1+1$}
\STATE{$M \leftarrow$ {\sc  Min-Entropy-Joint-Distribution}$(\m^{(i-j_1)}, \m^{(j_2-j)})$}   
\STATE{$w \leftarrow 1$}
\FOR{$s=1$ {\bf to} $|\m^{(i-j_1)}|$}
\FOR{$t=1$ {\bf to} $|\m^{(j_2-j)}|$}
\IF{$M[s,t] \neq 0$}
\STATE{$\m^{(i-j)}[w] \leftarrow M[s,t]$}
\STATE{$Indices^{(i-j)}[w] \leftarrow Indices^{(i-j_1)}[s] \odot Indices^{(i-j_1)}[t]$} 
\COMMENT{$\odot$ denotes the concatenation of vectors} 
\STATE{$w \leftarrow w+1$}
\ENDIF
\ENDFOR
\ENDFOR
\STATE{{\bf permute the components of} $\m^{(i-j)}$ and $Indices^{(i-j)}$ using the permutation that sorts $\m^{(i-j)}$ in non-increasing order} 
\ENDWHILE 
\STATE{$i \leftarrow j+1, \, j \leftarrow i+2^{\ell}-1$}
\ENDFOR  
\STATE{{\bf set} $M[i_1,i_2,\dots, i_k] = 0$ for each $i_1, i_2, \dots, i_k.$}
\FOR{$j = 1$ {\bf to} $|\m^{(1-k)}|$}
\STATE{$M[Indices^{(1-k)}[j]] \leftarrow \m^{(1-k)}[j]$}
\ENDFOR
\STATE{{\bf return} $M$}
\end{algorithmic}
\caption{The Min Entropy Joint Distribution Algorithm for $k>2$ distributions}
\label{algo:multidistribution}
\end{algorithm}

Let us define the following:

\begin{definition}
For any $\p = (p_1, \dots, p_n) \in \PP_n$ we denote by $\half(\p)$ the distribution  
$(\frac{p_1}{2}, \frac{p_1}{2}, \frac{p_2}{2}, \frac{p_2}{2}, \dots, \frac{p_n}{2}, \frac{p_n}{2})$ obtained by splitting each component of 
$\p$ into two identical halves.

For any $i \geq 2,$ let us also define $\half^{(i)}(\p) = \half(\half^{(i-1)}(\p)),$ where $\half^{(1)}(\p) = \half(\p)$ and   $\half^{(0)}(\p) = \p.$
\end{definition}

\medskip

We will employ the following two technical lemmas whose proofs are in the next section.

\begin{lemma} \label{monotone-morphism}
For any $\p \preceq \q$ we have also $\half(\p) \preceq \half(\q)$ 
\end{lemma}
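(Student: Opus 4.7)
The plan is to unpack what the partial sums of $\half(\p)$ and $\half(\q)$ look like and reduce the required majorization inequalities for $\half(\p) \preceq \half(\q)$ to the ones already known from $\p \preceq \q$.

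First I would check that $\half(\p), \half(\q) \in \PP_{2n}$: both clearly sum to $1$, and since $\p,\q$ are non-increasing, each pair of consecutive components $p_i/2,p_i/2$ is equal and then $p_i/2 \geq p_{i+1}/2$, so $\half(\p)$ is sorted non-increasingly; similarly for $\half(\q)$.

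Second, I would compute the $i$-th partial sum of $\half(\p)$ directly from the definition:
\[
T_i(\p) := \sum_{k=1}^{i} \half(\p)_k =
\begin{cases} \sum_{k=1}^{j} p_k, & \text{if } i = 2j, \\[2pt] \sum_{k=1}^{j-1} p_k + \tfrac{1}{2}p_j, & \text{if } i = 2j-1, \end{cases}
\]
and similarly for $T_i(\q)$. Then $\half(\p) \preceq \half(\q)$ is equivalent to $T_i(\p) \leq T_i(\q)$ for all $i = 1,\dots, 2n$.

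Third, I would verify $T_i(\p)\leq T_i(\q)$ by cases. When $i = 2j$, the inequality is exactly $\sum_{k=1}^j p_k \leq \sum_{k=1}^j q_k$, i.e., the majorization hypothesis on $\p$ and $\q$. When $i = 2j-1$, I would obtain the inequality $\sum_{k=1}^{j-1} p_k + \tfrac{1}{2}p_j \leq \sum_{k=1}^{j-1} q_k + \tfrac{1}{2}q_j$ by averaging the two consecutive majorization inequalities
\[
\sum_{k=1}^{j-1} p_k \leq \sum_{k=1}^{j-1} q_k, \qquad \sum_{k=1}^{j} p_k \leq \sum_{k=1}^{j} q_k,
\]
since the left-hand side of the target inequality is $\tfrac12$ times the sum of the two left-hand sides above, and analogously on the right.

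The only nontrivial step is the odd-index case, where one might worry that $p_j$ and $q_j$ are not directly comparable; the averaging trick above neatly sidesteps this and uses only what $\p\preceq\q$ already gives.
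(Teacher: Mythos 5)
Your proof is correct and follows essentially the same route as the paper's: the paper writes the $j$-th partial sum of $\half(\p)$ in a single formula as $\tfrac12\sum_{i\le\lceil j/2\rceil}p_i+\tfrac12\sum_{i\le\lfloor j/2\rfloor}p_i$ and applies the majorization hypothesis to each half, which is exactly your even/odd case split with the averaging trick for odd indices. The only cosmetic difference is that the paper's ceiling/floor formula unifies the two cases you write out separately.
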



\begin{lemma} \label{key-morphism-power}
For any pair of distributions $\p, \q \in \PP_n.$ and any $i \geq 0$, It holds that 
$$\half^{(i)}(\p \wedge \q) \preceq \half^{(i)}(\p) \wedge \half^{(i)}(\q).$$ 
\end{lemma}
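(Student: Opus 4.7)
The plan is to prove this by a clean induction on $i$, using Lemma \ref{monotone-morphism} (monotonicity of $\half$) together with the universal property of the greatest lower bound $\wedge$ in the majorization lattice $(\PP_n, \preceq)$ recalled in Section~II. Observe that the inequality we want follows from iterated application of a single lattice-theoretic fact, namely
\[
\half(\x \wedge \y) \preceq \half(\x) \wedge \half(\y) \quad \text{for all } \x,\y \in \PP_n,
\]
which I would establish first as the base step ($i=1$).

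For that base step, start from the defining properties $\x \wedge \y \preceq \x$ and $\x \wedge \y \preceq \y$. Applying Lemma \ref{monotone-morphism} to each of these inequalities yields
\[
\half(\x \wedge \y) \preceq \half(\x) \quad \text{and} \quad \half(\x \wedge \y) \preceq \half(\y).
\]
Thus $\half(\x \wedge \y)$ is a common lower bound (in the majorization order) of $\half(\x)$ and $\half(\y)$. By the universal property of the greatest lower bound, any common lower bound is majorized by the meet, so $\half(\x \wedge \y) \preceq \half(\x) \wedge \half(\y)$, as claimed. The case $i=0$ is trivial since both sides equal $\p \wedge \q$.

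For the inductive step, assume $\half^{(i-1)}(\p \wedge \q) \preceq \half^{(i-1)}(\p) \wedge \half^{(i-1)}(\q)$. Apply $\half$ to both sides and invoke Lemma \ref{monotone-morphism} to obtain
\[
\half^{(i)}(\p \wedge \q) \preceq \half\!\left(\half^{(i-1)}(\p) \wedge \half^{(i-1)}(\q)\right).
\]
Now apply the base step to the distributions $\x = \half^{(i-1)}(\p)$ and $\y = \half^{(i-1)}(\q)$, which yields
\[
\half\!\left(\half^{(i-1)}(\p) \wedge \half^{(i-1)}(\q)\right) \preceq \half^{(i)}(\p) \wedge \half^{(i)}(\q).
\]
Transitivity of $\preceq$ then closes the induction.

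There is really no hard obstacle here once Lemma \ref{monotone-morphism} is available: the claim is a formal consequence of monotonicity plus the universal property of the meet, and the only reason to spell out the induction is that $\half^{(i)}$ is defined by iteration. The only subtle point to verify is that all intermediate vectors, after reordering in non-increasing order, live in the appropriate $\PP_N$ (with $N = 2^i n$), so that the meets on both sides of the inequality are well defined and Lemma \ref{monotone-morphism} applies; but this is immediate from the definition of $\half$, which preserves the total mass and can be reordered into a non-increasing probability vector.
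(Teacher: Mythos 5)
Your proof is correct and takes essentially the same route as the paper's: establish the single-$\half$ inequality $\half(\x \wedge \y) \preceq \half(\x) \wedge \half(\y)$ from monotonicity (Lemma \ref{monotone-morphism}) together with the universal property of the meet, then induct on $i$ using monotonicity plus the base case. The only cosmetic difference is that the paper factors the base case out as a separate Lemma \ref{key-morphism} whereas you inline it, and you are slightly more explicit than the paper in noting that the inductive step also needs an application of Lemma \ref{monotone-morphism}.
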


\begin{theorem} \label{k-joint:main}
For each $\ell = 0, 1, \dots \kappa$ and $s = 0, 1, 2, \dots, k/2^{\ell}-1$ let $i = i(\ell, s) =  s \cdot 2^{\ell} + 1$ and 
$j = j(\ell, s) = (s+1) \cdot 2^{\ell} = i + 2^{\ell} -1.$ Then, we have 
$$\half^{(\ell)}(\p^{(i)} \wedge \p^{(i+1)} \wedge \cdots \wedge \p^{(j)}) \preceq \m^{(i-j)}.$$
\end{theorem}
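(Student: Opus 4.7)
The plan is to prove the claim by induction on $\ell$. The base case $\ell = 0$ reduces to $\p^{(i)} \preceq \p^{(i)}$, since lines 1--6 of Algorithm \ref{algo:multidistribution} set $\m^{(i-i)}$ to $\p^{(i)}$ rearranged in non-increasing order while $\half^{(0)}$ is the identity.

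For the inductive step I fix $\ell \geq 1$ and $s$, put $j_1 = i + 2^{\ell-1} - 1$, $j_2 = j_1 + 1$, and abbreviate $\p_L = \p^{(i)} \wedge \cdots \wedge \p^{(j_1)}$, $\p_R = \p^{(j_2)} \wedge \cdots \wedge \p^{(j)}$, $\m_L = \m^{(i-j_1)}$, $\m_R = \m^{(j_2-j)}$. The pairs $(\ell-1, 2s)$ and $(\ell-1, 2s+1)$ correspond exactly to the left and right halves, so the induction hypothesis gives $\half^{(\ell-1)}(\p_L) \preceq \m_L$ and $\half^{(\ell-1)}(\p_R) \preceq \m_R$. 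The greatest lower bound $\wedge$ is monotone in each argument---for $\x \preceq \x'$ and $\y \preceq \y'$, the vector $\x \wedge \y$ is a common lower bound of $\x',\y'$ and hence $\x \wedge \y \preceq \x' \wedge \y'$---and associative. Combining this monotonicity with Lemma \ref{key-morphism-power} (taken at $i = \ell-1$) I obtain
$$\half^{(\ell-1)}(\p^{(i)} \wedge \cdots \wedge \p^{(j)}) = \half^{(\ell-1)}(\p_L \wedge \p_R) \preceq \half^{(\ell-1)}(\p_L) \wedge \half^{(\ell-1)}(\p_R) \preceq \m_L \wedge \m_R.$$
Applying $\half$ once more and using Lemma \ref{monotone-morphism} yields
$$\half^{(\ell)}(\p^{(i)} \wedge \cdots \wedge \p^{(j)}) \preceq \half(\m_L \wedge \m_R).$$

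What remains, and what I expect to be the main obstacle, is bridging $\half(\m_L \wedge \m_R) \preceq \m^{(i-j)}$; transitivity then closes the induction. Here I would reuse the structural fact already exploited in the $+1$ entropy bound: setting $\z = \m_L \wedge \m_R$, the non-zero entries of the matrix produced by Algorithm \ref{algo:Algonew} on inputs $(\m_L, \m_R)$ partition into pairs $\{z_j^{(r)}, z_j^{(d)}\}$ with $z_j^{(r)} + z_j^{(d)} = z_j$. Thus (after sorting) $\m^{(i-j)}$ equals the disjoint union $\bigcup_j \{z_j^{(r)}, z_j^{(d)}\}$, while $\half(\z) = \bigcup_j \{z_j/2, z_j/2\}$. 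For each $j$ the pair $(z_j/2, z_j/2)$ has the same total mass $z_j$ as $(z_j^{(r)}, z_j^{(d)})$ but is at least as spread, so $(z_j/2, z_j/2) \preceq (z_j^{(r)}, z_j^{(d)})$. I would then invoke the standard closure of majorization under disjoint unions---which follows, for instance, by concatenating the per-$j$ doubly stochastic matrices into a single block-diagonal doubly stochastic matrix---to conclude $\half(\z) \preceq \m^{(i-j)}$. Verifying this union-preservation step, together with the accompanying reordering into non-increasing order, is the one piece of the argument that genuinely uses the internal structure of Algorithm \ref{algo:Algonew} and is where I would focus the most care.
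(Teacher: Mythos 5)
Your proposal is correct and follows essentially the same route as the paper's own proof: induction on $\ell$, Lemma \ref{key-morphism-power} combined with the monotonicity of $\wedge$ and of $\half$ (Lemma \ref{monotone-morphism}) to reach $\half(\m_L \wedge \m_R)$, and then the pairwise-split structure of Algorithm \ref{algo:Algonew}'s output to close the gap. Your block-diagonal doubly stochastic matrix argument for the last step is the same device the paper uses, just assembled one $2\times 2$ block at a time.
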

\begin{IEEEproof}
The proof is by induction on $\ell.$ The base case follows by definition of the operator $\half^{(\ell)}$ and the 
fact that the algorithm sets $\m^{(i-i)} = \p^{(i)},$ for each $i$ hence in particular  $\m^{(i-i)} = \p^{(i)} = \half^{(0)}(\p^{(i)}),$ which 
proves the desired inequality.

We now prove the induction step. Let $\ell > 0.$ It is enough to consider only the case $s = 0,$ since the other cases are perfectly analogous. 

Therefore, $i = 1$ and $j = 2^{\ell}.$ Using the notation employed in the pseudocode, let $j_1 = 2^{\ell-1}, \, j_2 = 2^{\ell-1}+1.$ 
By induction hypothesis we can assume that 
\begin{equation} \label{eq:inductionhypothesis-1}
\half^{(\ell-1)}(\p^{(i)} \wedge \p^{(i+1)} \wedge \cdots \wedge \p^{(j_1)}) \preceq \m^{(i-j_1)}
\end{equation}
\begin{equation} \label{eq:inductionhypothesis-2}
\half^{(\ell-1)}(\p^{(j_2)} \wedge \p^{(j_2+1)} \wedge \cdots \wedge \p^{(j)}) \preceq \m^{(j_2-j)}.
\end{equation}

It follows that
\begin{eqnarray}
\half^{(\ell)}\left(\bigwedge_{\iota = i}^j \p^{(\iota)}  \right) &=& 
    \half^{(\ell)}\left( \left( \bigwedge_{\iota = i}^{j_1} \p^{(\iota)} \right) 
                               \wedge \left( \bigwedge_{\iota = j_2}^j \p^{(\iota)}    \right)  \right)   \label{eq:half1}\\
    &=& \half \left( \half^{(\ell-1)}\left( \left( \bigwedge_{\iota = i}^{j_1} \p^{(\iota)} \right) 
                                                          \wedge \left( \bigwedge_{\iota = j_2}^j \p^{(\iota)}    \right)  \right) \right) \label{eq:half2}\\
    &\preceq& \half\left( \half^{(\ell-1)} \left( \bigwedge_{\iota = i}^{j_1} \p^{(\iota)}  \right)    \wedge
                             \half^{(\ell-1)} \left( \bigwedge_{\iota = j_2}^{j} \p^{(\iota)}  \right)   \right)   \label{eq:half3}\\
    &\preceq& \half\left( \m^{(i-j_1)} \wedge \m^{(j_2-j)} \right) \label{eq:half4}\\
    &\preceq& \m^{(i-j)}  \label{eq:half5}
\end{eqnarray}
where 
\begin{itemize}
\item (\ref{eq:half2}) follows from (\ref{eq:half1}) by the definition of the operator $\half$;
\item (\ref{eq:half3}) follows from (\ref{eq:half2}) by Lemma \ref{key-morphism-power};
\item (\ref{eq:half4}) follows from (\ref{eq:half3}) by the induction hypotheses (\ref{eq:inductionhypothesis-1})-(\ref{eq:inductionhypothesis-2}) ;
\item (\ref{eq:half5}) follows from (\ref{eq:half4}) by observing that the components of $\m^{(i-j)}$ 
coincide with the components of the array $M$ output by algorithm 
{\sc Min-Entropy-Joint-Distribution} executed on the distributions $\m^{(i-j_1)}$ and $\m^{(j_2-j)}.$ 
Let $\z = \m^{(i-j_1)} \wedge \m^{(j_2-j)}$ and $|\z|$ denote the number of components of $\z.$
By the analysis presented in the previous section we have that we can partition the components of $M$ (equivalently, the components
of $\m^{(i-j)}$) into subsets 
$M_1, M_2, \dots, M_{|\z|}$ such that 
\begin{itemize}
\item $1 \leq |M_i| \leq 2$
\item for each $i = 1, \dots, |\z|,$ it holds that  $\sum_{x \in M_i} x = z_i$;
\end{itemize}
Therefore---assuming, w.l.o.g., that the components of $\m^{(i-j)}$ are reordered such that those in $M_i$ immediately precede those in 
$M_{i+1}$---we have $\half(\z) = \m^{(i-j)} P$ where $P = [p_{i\, j}]$ is a doubly stochastic matrix defined by 
$$p_{i\,j} = \begin{cases}
\frac{1}{2} &  \mbox{if ($i$ is odd {\bf and} } j \in \{i, i+1\} \mbox{) {\bf or}  ($i$ is even {\bf and} } j \in \{i, i-1\}); \\
0 & otherwise
\end{cases} 
$$
\end{itemize}
from which it follows that $\half(\z) \preceq \m^{(i-j)}$ yielding \ref{eq:half5}.
\end{IEEEproof}

An immediate consequence of the last theorem is the following
\begin{corollary}
For any  $k$ probability distributions $\p^{(1)}, \dots, \p^{(k)}$ let $M$ be the joint distribution of $\p^{(1)}, \dots, \p^{(k)}$
 output by algorithm 
{\sc K-Min-Entropy-Joint-Distribution}. Then,  
$$H(M) \leq H(\p^{(1)} \wedge \p^{(2)} \wedge \cdots \p^{(k)}) + \lceil \log k \rceil$$ 
\end{corollary}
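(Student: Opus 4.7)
The plan is to apply Theorem \ref{k-joint:main} at its topmost level and then translate the resulting majorization into an entropy inequality using Schur-concavity. Specifically, I will instantiate Theorem \ref{k-joint:main} with $\ell = \kappa$ and $s = 0$, so that $i = 1$ and $j = 2^{\kappa} = k$. This yields
\begin{equation*}
\half^{(\kappa)}\!\left(\p^{(1)} \wedge \p^{(2)} \wedge \cdots \wedge \p^{(k)}\right) \;\preceq\; \m^{(1-k)}.
\end{equation*}
Since the output matrix $M$ simply places the components of $\m^{(1-k)}$ at positions recorded in $Indices^{(1-k)}$ and zeros elsewhere, and since $0\log(1/0) = 0$, we have $H(M) = H(\m^{(1-k)})$. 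By the Schur-concavity of Shannon entropy (recalled right after Fact \ref{glb}), the above majorization therefore gives
\begin{equation*}
H(M) \;=\; H(\m^{(1-k)}) \;\leq\; H\!\left(\half^{(\kappa)}(\p^{(1)} \wedge \cdots \wedge \p^{(k)})\right).
\end{equation*}

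The next step is a direct calculation showing that $H(\half(\r)) = H(\r) + 1$ for every $\r \in \PP_n$: each component $r_i$ is split into two halves $r_i/2, r_i/2$, so
\begin{equation*}
H(\half(\r)) = \sum_{i} 2 \cdot \tfrac{r_i}{2}\log\tfrac{2}{r_i} = H(\r) + \sum_i r_i = H(\r) + 1.
\end{equation*}
Iterating $\kappa$ times gives $H(\half^{(\kappa)}(\r)) = H(\r) + \kappa$. Applied to $\r = \p^{(1)} \wedge \cdots \wedge \p^{(k)}$ and combined with the previous inequality, this yields
\begin{equation*}
H(M) \;\leq\; H(\p^{(1)} \wedge \cdots \wedge \p^{(k)}) + \kappa \;=\; H(\p^{(1)} \wedge \cdots \wedge \p^{(k)}) + \log k,
\end{equation*}
which is the claimed bound when $k$ is a power of $2$.

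For arbitrary $k \geq 2$, I would pad the input by adding $2^{\lceil \log k \rceil} - k$ ``dummy'' distributions equal to the deterministic point mass $(1,0,\dots,0)$; such distributions contribute zero entropy, do not alter the meet $\p^{(1)} \wedge \cdots \wedge \p^{(k)}$, and reduce the problem to the power-of-two case with $\kappa = \lceil \log k\rceil$, giving the stated additive error $\lceil \log k\rceil$. The only non-routine step in the whole argument is the invocation of Theorem \ref{k-joint:main}; once that is granted, the remainder is a short calculation together with the Schur-concavity of $H$, so I do not anticipate any genuine obstacle here.
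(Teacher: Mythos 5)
Your proof is correct and follows essentially the same route as the paper's: instantiate Theorem \ref{k-joint:main} at the root to get $\half^{(\kappa)}(\p^{(1)}\wedge\cdots\wedge\p^{(k)})\preceq \m^{(1-k)}$, apply Schur-concavity of $H$, and use $H(\half^{(i)}(\r))=H(\r)+i$. The only (minor, improving) divergence is in the non--power-of-$2$ case: the paper pads by repeating input distributions (relying on idempotence of $\wedge$ and on marginalization not increasing entropy), whereas you pad with point masses $(1,0,\dots,0)$, which leaves the meet and the entropy of the output exactly unchanged -- a slightly cleaner reduction, but not a different argument.
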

\begin{proof}
Let $k$ be a power of $2$. Otherwise repeat some of the probability distribution until there are  $k' = 2^{\lceil \log k \rceil}$ of them. 
By Theorem \ref{k-joint:main} we have 
$$\half^{(\lceil \log k \rceil )}(\p^{(1)} \wedge \p^{(2)} \wedge \cdots \p^{(k)}) =  
\half^{(\log k'  )}(\p^{(1)} \wedge \p^{(2)} \wedge \cdots \p^{(k')})  \preceq \m^{(1-k)}.$$

Therefore, by the Schur-concavity of the entropy we have 
$$H(M) = H(\m^{(1-k)}) \leq H(\half^{(\lceil \log k \rceil )}(\p^{(1)} \wedge \p^{(2)} \wedge \cdots \p^{(k)})) 
= H(\p^{(1)} \wedge \p^{(2)} \wedge \cdots \p^{(k)}) + \lceil \log k \rceil,$$
where the last equality follows by the simple observation that 
for any probability distribution $\x$ and integer $i \geq 0$ we have $H(\half^{(i)}(\x)) = H(\x) + i.$
\end{proof}

\medskip

We also have the following lower bound which, together with the previous corollary implies that 
our algorithm guarantees an additive $\log k$ approximation for the problem of 
computing the joint distribution of minimum entropy of $k$ input distributions. 

\begin{lemma}
Fix $k$ distributions $\p^{(1)}, \p^{(2)},  \cdots,  \p^{(k)}$. For any $M$ being a joint distribution of 
$\p^{(1)}, \p^{(2)},  \cdots, \p^{(k)}$ it holds that 
$$H(M) \geq H(\p^{(1)} \wedge \p^{(2)} \wedge \cdots \p^{(k)})$$
\end{lemma}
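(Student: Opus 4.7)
My plan is to generalize directly the argument used for Lemma \ref{lemma:HM>Hpandq}, replacing the bivariate aggregation argument with a $k$-variate one. First, flatten the $k$-dimensional array $M = [m_{i_1,\ldots,i_k}]$ into a one-dimensional vector $\m \in \PP_{n^k}$ whose components are the entries of $M$ arranged in non-increasing order. Because $H(M) = H(\m)$ (the Shannon entropy depends only on the multiset of probabilities, not on the indexing), it suffices to show $H(\m) \geq H(\p^{(1)} \wedge \cdots \wedge \p^{(k)})$.

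Next, observe that for each $j \in \{1, \ldots, k\}$ the marginal constraint
$$\sum_{i_1,\ldots,i_{j-1},i_{j+1},\ldots,i_k} m_{i_1,\ldots,i_{j-1},t,i_{j+1},\ldots,i_k} = p^{(j)}_t$$
exhibits $\p^{(j)}$ as an aggregation of $\m$: the partition of the index set $\{1,\ldots,n^k\}$ is obtained by grouping together all entries of $M$ that share the same $j$-th index. Therefore Lemma \ref{pprecq} applies and gives
$$\m \preceq \p^{(j)} \qquad \text{for every } j = 1, \ldots, k.$$

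Now I would invoke the lattice structure of $(\PP_{n^k}, \preceq)$ recalled in the preliminaries. Since finite meets exist, the element $\p^{(1)} \wedge \p^{(2)} \wedge \cdots \wedge \p^{(k)}$ is well-defined (it can be built by repeatedly applying the binary meet, which is associative and commutative on a lattice), and by the universal property of the greatest lower bound, any common lower bound of $\p^{(1)}, \ldots, \p^{(k)}$ must be majorized by it. Applying this to $\m$ yields
$$\m \preceq \p^{(1)} \wedge \p^{(2)} \wedge \cdots \wedge \p^{(k)}.$$
Finally, by the Schur-concavity of the Shannon entropy,
$$H(M) = H(\m) \geq H(\p^{(1)} \wedge \p^{(2)} \wedge \cdots \wedge \p^{(k)}),$$
which is the desired bound.

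The argument is essentially routine once Lemma \ref{pprecq} and the lattice property are in hand; the only mild subtlety is justifying that the meet of $k$ vectors is well-defined and that $\m$ sits below it. Both follow immediately from the lattice axioms invoked earlier in the paper, so I do not anticipate any real obstacle.
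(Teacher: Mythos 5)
Your proof is correct and follows exactly the same route as the paper's: exhibit each $\p^{(j)}$ as an aggregation of the flattened $\m$ to get $\m \preceq \p^{(j)}$ via Lemma \ref{pprecq}, then pass to the $k$-fold meet by the greatest-lower-bound property, and finish with Schur-concavity. You spell out the flattening and the lattice bookkeeping slightly more explicitly than the paper does, but the substance is identical.
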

\begin{proof}
For each $i = 1, \dots, k,$ the distribution $\p^{(I)}$ is an aggregation of $M$, hence $M \preceq \p^{(i)}.$

By definition of the greatest lower bound operator $\wedge$ for any distribution $\x$ such that for each $i$ it holds that $\x \prec \p^{(i)}$ we have 
$\x \preceq \p^{(1)} \wedge \p^{(2)} \wedge \cdots \p^{(k)}$. Therefore, in particular we have
$M \preceq \p^{(1)} \wedge \p^{(2)} \wedge \cdots \p^{(k)},$ which, by the Schur concavity of the entropy 
gives the desired result. 
\end{proof}

 Summarising we have shown the following
 
 \begin{theorem}
Let $\p^{(1)}, \dots, \p^{(m)} \in \PP_n$. Let $M^*$ be a joint distribution of $\p^{(1)}, \dots, \p^{(m)}$ of minimum entropy among all 
the joint distribution of $\p^{(1)}, \dots, \p^{(m)}.$
Let $M$ be the joint distribution of $\p^{(1)}, \dots, \p^{(m)}$ output by our algorithm. Then, 
$$H(M) \leq H(M^*) + \lceil \log(m) \rceil.$$
Hence, our (polynomial) algorithm provides an additive $\log(m)$ approximation.
\end{theorem}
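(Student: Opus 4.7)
The plan is to combine two facts established earlier in Section \ref{sec:more}: the upper bound on $H(M)$ proven in the Corollary after Theorem \ref{k-joint:main}, and the lower bound on the entropy of any joint distribution given by the Lemma immediately after that Corollary. Concretely, I would first invoke the lower bound lemma, applied to the minimum-entropy joint distribution $M^*$, to conclude
\[
H(M^*) \;\geq\; H\bigl(\p^{(1)} \wedge \p^{(2)} \wedge \cdots \wedge \p^{(m)}\bigr).
\]
This only uses that $\p^{(i)}$ is an aggregation of $M^*$ for each $i$ and the definition of the greatest lower bound in the lattice $(\PP_n,\preceq)$, so nothing new is required.

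Next, I would apply the Corollary to the output $M$ of \textsc{K-Min-Entropy-Joint-Distribution} to obtain
\[
H(M) \;\leq\; H\bigl(\p^{(1)} \wedge \p^{(2)} \wedge \cdots \wedge \p^{(m)}\bigr) + \lceil \log m \rceil.
\]
Chaining the two displayed inequalities yields $H(M)\leq H(M^*)+\lceil \log m\rceil$, which is exactly the claim. The polynomial running time is inherited from the $O(n^2)$ bound on \textsc{Min-Entropy-Joint-Distribution}: the recursive tree of calls has $m-1$ internal nodes, and at each node the two input vectors have length at most $n^m$ in the worst case, so one must argue (or simply remark) that the overall running time remains polynomial in the natural input size; this is the only non-immediate bookkeeping step.

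There is no substantive obstacle here: the theorem is a packaging statement that merely reads off the sandwich inequality from the earlier corollary and lemma. All the real work has already been done in proving Theorem \ref{k-joint:main} (via Lemmas \ref{monotone-morphism} and \ref{key-morphism-power}) and in Lemma \ref{lemma:HM>Hpandq}-type reasoning extended to $k$ distributions through the lattice structure. Thus the proof will be a three-line derivation followed by a sentence noting the polynomial time complexity.
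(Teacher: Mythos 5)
Your proof is correct and matches the paper's intent exactly: the paper itself presents this theorem as a summarising statement that follows directly by chaining the preceding Corollary (upper bound on $H(M)$ in terms of $H(\p^{(1)} \wedge \cdots \wedge \p^{(m)})$) with the preceding Lemma (lower bound on $H(M^*)$ by the same quantity). Your additional remark about bookkeeping the running time is a reasonable observation that the paper glosses over.
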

  
\subsection{The proofs of the two technical lemmas about the operator $\half$}

\medskip
\noindent
{\bf Lemma \ref{monotone-morphism}.}
{\em For any $\p \preceq \q$ we have also $\half(\p) \preceq \half(\q)$.} 

\begin{proof}
It is easy to see that assuming $\p$ and $\q$ rearranged in order to have 
$p_1 \geq p_2 \geq \dots \geq p_n$ and $q_1 \geq q_2 \geq \dots \geq q_n$ we also have
$\half(\p)_1 \geq \half(\p)_2 \geq \dots \geq \half(\p)_{2n}$ and $\half(\q)_1 \geq \half(\q)_2 \geq \dots \geq \half(\q)_{2n}.$

By assumption we also have that for each $j = 1, \dots, n$ it holds that $\sum_{i=1}^j p_i \leq  \sum_{i=1}^j p_i.$

Therefore, for each $j=1, \dots 2n$ it holds that 
$$\sum_{i=1}^j \half(\p)_i = \frac{1}{2} \sum_{i=1}^{\lceil j/2 \rceil} p_i + \frac{1}{2} \sum_{i=1}^{\lfloor j/2 \rfloor} p_i \leq
\frac{1}{2} \sum_{i=1}^{\lceil j/2 \rceil} q_i + \frac{1}{2}\sum_{i=1}^{\lfloor j/2 \rfloor} q_i  = \sum_{i=1}^j \half(\q)_i,$$
proving that $\half(\p) \preceq \half(\q).$
\end{proof}

\medskip

\begin{lemma} \label{key-morphism}
For any pair of distributions $\p, \q \in \PP_n.$ It holds that 
$$\half(\p \wedge \q) \preceq \half(\p) \wedge \half(\q).$$ 
\end{lemma}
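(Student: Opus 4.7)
The plan is to deduce this lemma as a direct consequence of the monotonicity of $\half$ (Lemma \ref{monotone-morphism}) together with the universal property of the greatest lower bound in the majorization lattice $(\PP_n,\preceq)$.

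First I would invoke the definition of $\wedge$: from $\p \wedge \q \preceq \p$ and $\p \wedge \q \preceq \q$ (these are the defining properties of $\p \wedge \q$ as a lower bound of $\p$ and $\q$), I apply Lemma \ref{monotone-morphism} componentwise to both inequalities. This yields
\begin{equation*}
\half(\p \wedge \q) \preceq \half(\p)
\quad\text{and}\quad
\half(\p \wedge \q) \preceq \half(\q).
\end{equation*}

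Next, I would invoke the universal property of the greatest lower bound: since $\half(\p \wedge \q)$ is a common lower bound (in the majorization order) of $\half(\p)$ and $\half(\q)$, it must be majorized by their greatest lower bound. This gives exactly
\begin{equation*}
\half(\p \wedge \q) \preceq \half(\p) \wedge \half(\q),
\end{equation*}
as desired.

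There is essentially no obstacle here beyond verifying that the definition of $\wedge$ and the universal property apply in $\PP_{2n}$ (the ambient lattice for $\half(\p)$ and $\half(\q)$), which is immediate. The whole argument is an abstract lattice-theoretic observation: any order-preserving map $f$ between lattices satisfies $f(x \wedge y) \preceq f(x) \wedge f(y)$, and here $f = \half$ is order-preserving by Lemma \ref{monotone-morphism}. Note that equality does not generally hold because $\half$ need not preserve infima — which is precisely why Lemma \ref{key-morphism-power} will later require an additional inductive argument rather than an identity.
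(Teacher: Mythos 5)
Your argument is correct and coincides exactly with the paper's proof: both apply Lemma \ref{monotone-morphism} to the defining inequalities $\p\wedge\q\preceq\p$ and $\p\wedge\q\preceq\q$ to obtain $\half(\p\wedge\q)\preceq\half(\p)$ and $\half(\p\wedge\q)\preceq\half(\q)$, and then invoke the universal property of the greatest lower bound. Your closing remark that this is the generic ``order-preserving maps are sub-meet'' fact is a nice clarification of why the result holds.
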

\begin{proof}
By the previous lemma we have that 
$$\half(\p \wedge \q) \preceq \half(\p) \qquad \mbox{and} \qquad \half(\p \wedge \q) \preceq \half(\q)$$
Then, by the property of the operator $\wedge$ which gives the greatest lower bound we have the desired result.
\end{proof}

On the basis of this Lemma we can extend the result to "powers" of the operator $\half$ and have 
our Lemma \ref{key-morphism-power}. 

\medskip
\noindent
{\bf Lemma \ref{key-morphism-power}.} {\em 
For any pair of distributions $\p, \q \in \PP_n.$ and any $i \geq 0$, It holds that 
$$\half^{(i)}(\p \wedge \q) \preceq \half^{(i)}(\p) \wedge \half^{(i)}(\q).$$ 
}
\begin{proof}
We argue by induction on $i$. The base case $i=1$ is given by the previous lemma. Then,
for any $i > 1$ 

$$\half^{(i)}(\p \wedge \q) = \half(\half^{(i-1)}(\p \wedge \q)) \preceq 
\half(\half^{(i-1)}(\p)  \wedge \half^{(i-1)}(\q)) \preceq \half(\half^{(i-1)}(\p)) \wedge  \half(\half^{(i-1)}(\p))$$
from which the desired result immediately follows. The first $\preceq$-inequality follows by induction hypothesis and the
second inequality by Lemma \ref{key-morphism}.
\end{proof}


\newpage

\newpage
\appendix
\subsection{A numerical example}
Fix $n = 13$ and let $$\p = (0.35, 0.095, 0.09, 0.09, 0.09, 0.09, 0.08, 0.06, 0.035, 0.015, 0.003, 0.001, 0.001)$$ and 
$$\q = (0.15, 0.15, 0.145, 0.145, 0.14, 0.13, 0.05, 0.03, 0.03, 0.027, 0.002, 0.0005, 0.0005),$$ be the two probability 
distribution for which we are seeking a joint probability of minimum entropy.
By Fact \ref{glb} we have
$$\z = \p \land \q = (0.15, 0.15, 0.145, 0.145, 0.125, 0.09, 0.08, 0.055, 0.03, 0.025, 0.003, 0.001, 0.001).$$ 

By Definition \ref{defi:inversions} we have that the inversion points are $i_0 = 14, i_1 = 11, i_2 = 9, i_3 = 6, i_4 = 1.$

The resulting joint distribution produced by \textbf{Algorithm 1} is given by the following matrix $\mathbf{M} = [m_{i\,j}]$,
satisfying the property that   $\sum_{j} m_{i\, j} = p_i$ and $\sum_{i} m_{i\, j} = q_j.$

\begin{Large}
$${\mathbf{\scalebox{0.85}{M}}} = \left (\begin{smallmatrix}
0.15 & 0.145 & 0.055 & 0 & 0 & 0 & 0 & 0 & 0 & 0 & 0 & 0 & 0 \\
0 & 0.005 & 0 & 0.09 & 0 & 0 & 0 & 0 & 0 & 0 & 0 & 0 & 0 \\
0 & 0 & 0.09 & 0 & 0 & 0 & 0 & 0 & 0 & 0 & 0 & 0 & 0 \\
0 & 0 & 0 & 0.055 & 0.035 & 0 & 0 & 0 & 0 & 0 & 0 & 0 & 0 \\
0 & 0 & 0 & 0 & 0.09 & 0 & 0 & 0 & 0 & 0 & 0 & 0 & 0 \\
0 & 0 & 0 & 0 & 0.015 & 0.075 & 0 & 0 & 0 & 0 & 0 & 0 & 0 \\
0 & 0 & 0 & 0 & 0 & 0.055 & 0.025 & 0 & 0 & 0 & 0 & 0 & 0 \\
0 & 0 & 0 & 0 & 0 & 0 & 0.025 & 0.03 & 0.005 & 0 & 0 & 0 & 0 \\
0 & 0 & 0 & 0 & 0 & 0 & 0 & 0 & 0.025 & 0.01 & 0 & 0 & 0 \\
0 & 0 & 0 & 0 & 0 & 0 & 0 & 0 & 0 & 0.015 & 0 & 0 & 0 \\
0 & 0 & 0 & 0 & 0 & 0 & 0 & 0 & 0 & 0.002 & 0.001 & 0 & 0 \\
0 & 0 & 0 & 0 & 0 & 0 & 0 & 0 & 0 & 0 & 0.0005 & 0.0005 & 0 \\
0 & 0 & 0 & 0 & 0 & 0 & 0 & 0 & 0 & 0 & 0.0005 & 0 & 0.0005 
\end{smallmatrix}\right )
$$
\end{Large}

Notice that by construction
\begin{itemize}
\item for the submatrix $\mathbf{M}^{(i_1)} = [m_{i\,j}]_{i_1 \leq i \leq i_0-1, \, i_1-1 \leq j \leq i_0-1}$ we have that each row $i$ contains at most 
two elements and the sum of the elements on the row equals $z_i$
\item for the submatrix $\mathbf{M}^{(i_2)} = [m_{i\,j}]_{i_2-1 \leq i \leq i_1-1, \, i_2 \leq j \leq i_1-1}$ we have that each column $i$  contains at most 
two elements and the sum of the elements on the column equals $z_i$
\item for the submatrix $\mathbf{M}^{(i_3)} = [m_{i\,j}]_{i_3 \leq i \leq i_2-1, \, i_3-1 \leq j \leq i_2-1}$ we have that each row $i$  contains at most 
two elements and the sum of the elements on the row equals $z_i$
\item for the submatrix $\mathbf{M}^{(i_4)} = [m_{i\,j}]_{i_4 \leq i \leq i_3-1, \, i_4 \leq j \leq i_1-1}$ we have that each column $i$ c contains at most 
two elements and the sum of the elements on the column $z_i$
\end{itemize}

Notice that these four sub-matrices cover all the non-zero entries of $\mathbf{M}$.
This easily shows that the entries of the matrix $\mathbf{M}$ are obtained by splitting into at most two pieces the components of $\z$,
 implying the desired bound 
$H(\mathbf{M}) \leq H(\z) +1.$

\end{document}